\theoremstyle{definition}
\newtheorem{thm}{Theorem}[section]
\newtheorem{cor}[thm]{Corollary}
\newtheorem{pro}[thm]{Proposition}
\newtheorem{lem}[thm]{Lemma}
\newtheorem{rem}[thm]{Remark}
\numberwithin{equation}{section}
\newcommand{\gd}{\gamma_\diamond}
\newcommand{\dd}{\delta}
\newcommand{ \bn}{{\bar n}}
\newcommand{\bbl}{Blanc, LeBris \& Lions }
\newcommand{\hn}{{\hat n}}
\newcommand{\tn}{{\tilde n}}
\newcommand{\be}{\begin{equation}}
\newcommand{\ee}{\end{equation}}
\newcommand{\ben}{\begin{equation*}}
\newcommand{\een}{\end{equation*}}
\newcommand{\ph}{\varphi}
\newcommand{\lk}{L_k}
\newcommand{\kk}{{1\over k}}
\newcommand{\e}{\varepsilon}
\newcommand{\Om}{\Omega}
\newcommand{\Ss}{\Sigma}
\newcommand{\po}{{\partial\Omega}}
\newcommand{\ol}{{\Omega\cap L}}
\newcommand{\oel}{{\Omega\cap \e L}}
\newcommand{\ZZ}{\mathbb{Z}}
\newcommand{\RR}{\mathbb{R}}
\newcommand{\RRR}{\RR^3}
\newcommand{\ZZZ}{\ZZ^3}
\newcommand{\bc}{\setminus}
\newcommand{\Omp}{{\Om^+}}
\newcommand{\Omm}{{\Om^-}}
\newcommand{\Ompm}{{\Om^\pm}}
\newcommand{\Fp}{{F^+}}
\newcommand{\Fm}{{F^-}}
\newcommand{\Fpm}{{F^\pm}}
\newcommand{\Lo}{{L}}
\newcommand{\cho} {\chi_{{}_{\Om}}}
\newcommand{\p} {\partial}
\def\Xint#1{\mathchoice
 {\XXint\displaystyle\textstyle{#1}}%
 {\XXint\textstyle\scriptstyle{#1}}%
 {\XXint\scriptstyle\scriptscriptstyle{#1}}%
 {\XXint\scriptscriptstyle\scriptscriptstyle{#1}}%
 \!\int}
\def\XXint#1#2#3{{\setbox0=\hbox{$#1{#2#3}{\int}$}
 \vcenter{\hbox{$#2#3$}}\kern-.5\wd0}}
\def\dashint{\Xint-}
\newcommand{\dint}{\dashint}
\newcommand{\fk}{\frac{1}{|K|}}
\begin{document}

\title{\textsc{The Interfacial Energy of a Phase Boundary\\ via a Lattice-Cell Average Approach}}
\author{\textsc{\textbf{Phoebus Rosakis}}\\
 Department of Theoretical \\and Applied Mathematics\\ University of Crete\\Heraklion 70013, Greece\\{\&} \\ Institute of Applied and Computational Mathematics\\
 Foundation for Research and Technology\\Heraklion 70013, Greece\\
\texttt{rosakis@uoc.gr}}

\date{}

\maketitle

\begin{abstract} \noindent The calculation of the discrete atomistic energy of a crystal near the continuum limit encounters difficulties caused by the geometric discrepancy between the continuum region occupied by the body, and the discrete collection of lattice points contained in it. This results in ambiguities in the asymptotic expansion of the energy for small values of the lattice parameter, that  are traced back to the lattice point problem of number theory. The lattice-cell average of the discrete energy  is introduced and is shown to eliminate this ambiguity in various circumstances. It is used to find explicit continuum expressions for surface energies and interfacial energies of coherent phase boundaries in deformed crystals in terms of the interatomic potential.\end{abstract}

\section{Introduction}
\label{intro}
The description of the atomistic energy of a crystal near the continuum limit entails various difficulties, some of which are geometric in nature. These stem from the discrepancy between the continuum region occupied by the body, and the discrete collection of lattice points contained in it. The energy in question is 
\be\label{ene}E_\e\{y,\Om\}= \frac{\e^3}{2}\sum_{x\in\Om\cap \e L} \,\sum_{\;z\in\Om\cap \e L} \Phi\left(\frac{y(z)-y(x)}{\e}\right).
\ee
Here $\Om\subset\RRR$ is the \emph{continuous body} (macroscopic reference region), $L \subset \RRR$ is a simple (Bravais) lattice, $\e>0$ is a scaling factor proportional to the lattice parameter of the scaled lattice $\e L$, $\Phi:\RRR\to\RR$ is the interatomic potential for pair interactions and $y:\Om\to\RRR$ a macroscopic deformation, which we assume is followed by all atoms. The \emph{discrete body} is $\oel$.
 \bbl (Theorem 3, \cite{blanc1}) obtain an asymptotic expansion of \eqref{ene} for sufficiently smooth diffeomorphisms $y$:
\be\label{asy}E_\e\{y,\Om\}=\int_\Om W(\nabla y )dx+\e \int_\po \Gamma(\nabla y,\po) dA+\e^2\int_\Om U(\nabla y ,\nabla \nabla y)dx+O(\e^2),\quad \e\to0.\ee
Here the volume integrands, namely the stored energy function $W$ and higher gradient energy function $U$, are explicitly determined (see \eqref{0} for $W$), but the surface energy density $\Gamma$ and certain other surface terms of $O(\e^2)$ are not. The reasons for this are geometrical.
Given the continuous body $\Om$, the volume $|\Om|$ is not in general equal to the so-called \emph{discrete volume} $\e^3\#(\Om\cap\e L)$, the number of lattice points in $\Om$ times the lattice cell volume. The asymptotics as $\e\to0$ of the difference
\be\label{rem}R(\e)=|\Om|-\e^3\#(\Om\cap\e L),\ee
 known as the \emph{remainder}, is the subject of the \emph{lattice point problem} in number theory, aspects of which are still open; e.g., Beck \& Robins \cite{br}. Even in simple cases, e.g., $\Om$ a sphere, the remainder is highly oscillatory and its asymptotic expansion in powers of $\e$ depends on the $\e$ sequence. The representation \eqref{asy} holds provided $\e=\e_k\to 0$, as $k\to\infty$, $k\in\ZZ$, where $\e_k$ is a putative sequence of scale factors such that the remainder vanishes: $R(\e_k)=0$. This condition\footnote{It does not appear to be known whether such a sequence exists in more than one dimension even in simple cases, such as $\Om$ a sphere or ellipsoid.} points to the geometrical discrepancy between the discrete and continuous bodies as the culprit for the fact that the asymptotic expansion of the energy in powers of $\e$ depends on the choice of sequence of $\e\to0$. In one dimension, this is demonstrated directly by Mora-Corral \cite{mora}, who obtains all terms up to order $\e^2$ and their deprendence on the $\e$-sequence explicitly. 
 
 In more than one dimensions, the role of the geometric discrepancy between the continuous and discrete body is emphasized by Rosakis \cite{rosakis}, who reduces the calculation of the energy \eqref{ene} to certain lattice point problems. The asymptotic behavior of these depends on whether boundary surfaces are rational (crystallographic) or irrational. In particular, some terms in the expansion depend on the $\e$-sequence in the rational case, e.g., if $\Om$ is a lattice polyhedron, but not in the irrational case, for example when $\Om$ is a smooth strictly convex domain.
 The resulting expression for the surface energy density $\Gamma$ in \eqref{asy}, of the form $\hat \gamma(\nabla y, n)$ (Proposition 4.3, \cite{rosakis}), is discontinuous at rational values of the boundary normal $n$ and continuous at irrational values. This explicit form of $\hat \gamma$ is valid for a special choice of sequence, $\e_k=1/k$, $k\in\ZZ$. 
 
 One way to eliminate the discontinuous behavior is to modify the continuous body\footnote{Given the discrete body---a finite subset $D$ of $\e L $---there is no unique way of choosing a region $\Om\subset\RRR$ (open set or its closure) that contains it and no other lattice points (i.e., $\oel=D$). Thus $\Om$ can be modified without changing $D$ or the energy.} for each $\e_k=1/k$ so that the remainder $R(1/k)=0$, namely its volume equals the discrete volume. This is done in a simple geometrical way, and the result (Proposition 5.1, \cite{rosakis}) is an explicit and continuous surface energy density function:
 \be\label{g5}\gamma(F,n)=-\frac{1}{4}\sum_{w\in \Lo} |w\cdot n|\Phi(Fw)\ee
 for any deformation gradient matrix $F$ with $\det F>0$ and any unit normal vector $n$.
However, the fact that the continuous body now depends on $\e$ and a special sequence $\e_k=1/k$ is required is still awkward.\footnote{The modified continuous body has the physically desirable property that the continuum mass and the discrete mass coincide, which is is equivalent to vanishing of the remainder \eqref{rem}. See Remark 10 in \cite{rosakis}.} 

Here we propose an alternative approach that eliminates the problems associated with geometric discrepancy between the continuous and discrete body. Typically, one is interested in sums over the discrete body $\ol$, of the form 
\be\label{sum}
 \sum_{x\in\Om\cap L} \ph(x) \quad \hbox{or}\quad \sum_{x\in\Om\cap L} \sum_{\;y\in\Om\cap L} \psi(x,y).
 \ee 
Choosing $\ph=1$ in the first sum gives the lattice point number $\#(\Om\cap L)$ of $\Om$, which also equals the mass of the body, iff each lattice point has a unit atomic mass. The energy \eqref{ene} is in the form of the second sum. 

 In general, the orientation of the lattice relative to the macroscopic (continuous) body can be determined by methods such as x-ray crystallography; see, e.g., Zachariasen \cite{zachariasen}. In contrast, the exact position of the lattice relative to the body cannot be ascertained to the same degree, and perhaps it is meaningless to do so, at least for large enough bodies. This motivates introducing the \emph{cell average} of quantities defined as sums over the discrete body, such as \eqref{sum}. That would be the average value of the sum over all possible positions (but fixed orientation) of the lattice relative to the continuous body $\Om$. The sums in \eqref{sum} are periodic in lattice translations $L\to u+L$, $u\in\RRR$, with period the lattice cell $K$ (defined below in \eqref{k}). Thus we let the cell average of the single sum be
\be\label{c1}\dint_K \sum_{x\in\Om\cap (u+L)} \ph(x)d u:=\frac{1}{|K|}\int_K \sum_{x\in L} \cho(x+u)\ph(x+u)d u,\ee
where we have extended $\ph:\Om\to\RR$ to the whole of $\RRR$ arbitrarily, and $\cho$ is the characteristic function of $\Om$. The cell average of the double sum is defined analogously; replace $L$ in both sums in \eqref{sum} by $u+L$ and take the average over $u\in K$, see \eqref{c2} below. Essentially, the cell average is an ensemble average over all possible positions of the lattice relative to the continuous body.

The advantages are immediate. For example, consider a scaled version of the single sum in \eqref{sum}, the Riemann sum $\e^3 \sum_{x\in\Om\cap\e L} \ph(x)$. It approaches $\int_\Om\ph$ as $\e\to 0$, but the higher order terms in its expansion in powers of $\e$ are difficult to characterize explicitly for general $\Om$ and $\Phi$, as discussed by Guillemin \& Sternberg \cite{gs}. On the other hand, it is very easy to show that \emph{the cell average of the Riemann sum is exactly equal to the integral $\int_\Om\ph$ for all }$\e>0$ (see Lemma \ref{lem1} below). Once averaged, the lattice point problem becomes trivial: the cell average of the discrete volume equals the continuous volume regardless of scale, and there are no problematic higher order terms. This is encouraging, and we study the cell average of the energy.

In Section \ref{sec2} we define cell averages of single and multiple sums on the discrete body $\Om\cap\e L$, in particular, the energy \eqref{ene} of a deformed crystal subject to binary interactions. In effect, this replaces one of the sums by an integral over $\Om$, and this is advantageous for analytical calculations.
The cell average \eqref{e1} of the energy is now the integral over the continuous body (instead of the sum over the discrete body) of an energy density that depends of finite differences of the deformation; see Corollary \ref{cor2}.

Section \ref{sec3} contains a calculation of the continuum limit of the cell-averaged energy in case the deformation is merely Lipschitz, so that it admits sharp phase boundaries (gradient discontinuity surfaces). In the limit we recover the usual elastic energy, without additional interfacial energies; these are of higher order in $\e$ and we study them later on in Section \ref{sec5}.

Turning to smooth deformations in Section \ref{sec4}, we study the term of order $\e$ in the asymptotic expansion of the energy; this is identical to the surface energy \eqref{g5} obtained by Rosakis (Proposition 5.1, \cite{rosakis}).

Coherent phase or twin boundaries are modelled in Section \ref{sec5} as continuous deformations with piecewise constant gradient that jumps across a plane; the interface is thus sharp. In some situations, atomic displacements due to deformation twins is some crystals appear to be well described by such piecewise affine deformations (down to the atomic scale; an example is shown in Fig. 4(b) of Zhu, Liao \& Wu \cite{zhu}). This suggests that the Cauchy-Born hypothesis may be appropriate in the case of some twin interfaces, as atoms appear to follow an affine deformation on either side of the interface, whereas near a free surface there is microscopic relaxation from the macroscopic deformation. The energy of a piecewise affine deformation contains an interfacial term of order $\e$ like the surface energy. In one dimension, such a term was obtained by Mora-Corral (\cite{mora} Theorem 6). The result of cell averaging is an explicit representation of the interfacial energy (Proposition \ref{prop3}) in the form
$$\e\int_\Ss \sigma(F^+,F^-, \hn)dA,$$
where $\Ss$ is the interface with unit normal $ \hn$ and $F^\pm$ are the limiting values of $\nabla u$ (which has a jump across $\Ss$) on either side of $\Ss$.
It is noteworthy that the interfacial energy density is expressible completely in terms of the surface energy density $\gamma(F,n)$ of \eqref{g5} 
$$
 \sigma(F^+,F^-, \hn)=\gamma(F^+, \hn)+\gamma(F^-, \hn)-2\int_{0}^{1}\gamma\left(tF^++(1-t)F^-, \hn\right)dt.$$

We remark that all of the results involving cell averages are free from any sequential depencence issues; they are valid for an arbitrary sequence of $\e\to 0$.

The fully discrete counterparts of the aforementioned surface and interfacial energies are constructed in Section \ref{sec6}. The discrete surface energy density depends of the way the continuous body is defined (Rosakis \cite{rosakis} Proposition 5.1). For example, if $\Om$ is a lattice polyhedron, which means that its vertices are lattice points, the discrete surface energy differs from the cell-averaged one, while if $\Om$ is altered so at to eliminate the geometric discrepancy \eqref{rem} between the continuous and discrete volumes, the resulting surface energy equals the cell average (Proposition \ref{prop4}).
 Such a modification involves translating the facets of $\po$, and is not feasible for a planar phase boundary, which is interior to $\Om$. The discrete interfacial energy involves a finite sum which turns out to equal the trapezoidal approximation of an integral that appears in the corresponding cell-averaged expression. It is then possible to define an approximating sequence of interfaces, whose Miller indices grow unbounded as the original orientation is approached. The discrete surface and interfacial energies associated with this sequence approach the cell average in the limit.


 \section{The Cell Average Approach}\label{sec2}

Let $L\subset\RR^3$ be a simple (Bravais) lattice. If $e_i\in L$ form a lattice basis for $L$, the set 
\be\label{k}K=\Bigl\{x\in\RR^3 :x=\sum_{i=1}^3 z_ie_i,\; 0\le z_i<1\Bigr\}\ee
is a \emph{lattice cell}. For subsets $P$, $Q$ of $\RR^3$, define the Minkowski sum $P\oplus Q=\{p+q:p\in P,\; q\in Q\}$ and write $p+Q=\{p\}\oplus Q$. For $\e\in\RR$, $\e P=\{\e x: x\in P\}$.

The continuous body is a bounded Lipschitz domain $\Om\subset\RRR$, the \emph{discrete body} is $\Om\cap L$. Consider sums over the discrete body, of the form 
\eqref{sum}. We introduce the \emph{cell average} of a sum over the discrete body. That would be the average value of the sum over all possible positions of the lattice relative to the continuous body $\Om$. The sums in \eqref{sum} are periodic in lattice translations $L\to u+L$, $u\in\RRR$, with period the lattice cell $K$. Thus we define the cell average of the single sum by \eqref{c1}.
%
The cell average of the double sum is taken to be
\be\label{c2}\dint_K\sum_{x\in\Om\cap (u+L)} \sum_{\;\;y\in\Om\cap (u+L)} \psi(x,y)\;du= \fk
\int_K\sum_{x\in L} \sum_{\;y\in L}\cho(x+u)\cho(y+u) \psi(x+u,y+u)\; du.\ee

\begin{lem}\label{lem1} (i) For $\ph\in L^1(\Om)$,
\be\label{s1}\dint_K \sum_{x\in\Om\cap (u+L)} \ph(x)d u=\fk\int_\Om\ph(x)dx.\ee
(ii) If $x\mapsto\psi(x,w+x)$ is in $ L^1(\Om\cap(-w+\Om))$ for $w\in L$,
\be\label{xy}\dint_K\sum_{x\in\Om\cap (u+L)} \sum_{\;\;y\in\Om\cap (u+L)} \psi(x,y)\;du=
 \fk
 \sum_{w\in L}\int_\Om \cho(x+w)\psi(x,x+w)\; dx.\ee
 \end{lem}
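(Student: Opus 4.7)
The plan is to reduce both identities to a single structural observation: the translates $\{x+K\}_{x\in L}$ form a measurable tiling of $\RRR$, i.e.\ they are pairwise disjoint up to null sets and their union is $\RRR$ (this is what it means for $K$ to be a fundamental domain of $L$). Everything else is Fubini/Tonelli plus a translation change of variables.

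For part (i), I would start from the right-hand side of the definition \eqref{c1}. Extending $\ph$ by zero outside $\Om$, the integrand $\cho(x+u)\ph(x+u)$ is nonnegative in modulus and globally integrable, so Tonelli justifies exchanging the sum over $x\in L$ with the integral over $u\in K$:
\be\label{pfa}
\frac{1}{|K|}\int_K\sum_{x\in L}\cho(x+u)\ph(x+u)\,du
=\frac{1}{|K|}\sum_{x\in L}\int_K\cho(x+u)\ph(x+u)\,du.
\ee
For each fixed $x$ the substitution $v=x+u$ turns the inner integral into $\int_{x+K}\cho(v)\ph(v)\,dv$. Summing over $x\in L$ and invoking the tiling property gives $\int_{\RRR}\cho\ph=\int_\Om\ph$, which is \eqref{s1}.

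Part (ii) proceeds along the same lines, with one extra change of summation index. Start from \eqref{c2}. Writing $y=x+w$ in the inner sum converts $\sum_{x,y\in L}$ into $\sum_{x\in L}\sum_{w\in L}$ and produces the integrand
\be\label{pfb}
\cho(x+u)\cho(x+w+u)\,\psi(x+u,\,x+w+u).
\ee
The integrability hypothesis on $x\mapsto\psi(x,x+w)$ together with the factor $\cho(x+u)\cho(x+w+u)$ (which is supported where both $x+u\in\Om$ and $x+u\in -w+\Om$) provides the absolute integrability needed to apply Fubini and freely swap $\int_K$, $\sum_{x\in L}$ and $\sum_{w\in L}$. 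Next, for each fixed $w$, substitute $v=x+u$ in the $u$-integral and then use the tiling property $\sum_{x\in L}\int_{x+K}=\int_{\RRR}$, exactly as in part (i); this collapses the $x$-sum and $u$-integral into an integral of $\cho(v)\cho(v+w)\psi(v,v+w)$ over $\RRR$, which is $\int_\Om\cho(x+w)\psi(x,x+w)\,dx$. Summing over $w\in L$ yields \eqref{xy}.

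The only real point requiring care is the justification for interchanging sum and integral in (ii). Part (i) causes no trouble because after extending $\ph$ by zero the summand is eventually zero for each $u$ (only those $x\in L$ with $x+u\in\Om$ contribute, and there are finitely many such $x$ since $\Om$ is bounded). For (ii) the same boundedness argument shows that, for each $u\in K$, only finitely many pairs $(x,w)\in L\times L$ contribute, so the sum in \eqref{pfb} is actually finite pointwise in $u$; combined with the $L^1$-hypothesis on $\psi(\cdot,\cdot+w)$ this legitimizes Fubini. Once this is in place, the computation is purely algebraic.
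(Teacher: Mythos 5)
Your proposal is correct and follows essentially the same route as the paper: finitely many nonzero terms (since $\Om$ is bounded) justify the interchange, then the translation $u\mapsto x+u$ and the tiling of $\RR^3$ by $\{x+K\}_{x\in L}$ collapse the sum-and-integral into $\int_{\RR^3}$, with the substitution $w=y-x$ handling part (ii). No gaps; the argument matches the paper's proof of Lemma \ref{lem1}.
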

 \begin{proof} Note that all sums in the right hand sides of \eqref{c1} and \eqref{c2} only have a finite number of nonzero terms because 
 \be\label{chop}|w|>\hbox{diam}\Om\implies \cho(x+w)=0 \quad \forall x\in\Om .\ee
 The integral in the right hand side of \eqref{c1} equals
 $$ \sum_{x\in L} \int_K\cho(x+u)\ph(x+u)d u=\sum_{x\in L} \int_{x+K}\cho(z)\ph(z)d z=\int_{\RR^3}\cho(z)\ph(z)d z$$
since $K$ tiles $\RR^3$; this shows (i).

The integral in the right hand side of \eqref{c2} equals
$$\sum_{x\in L} \int_K \sum_{\;y\in L}\cho(x+u)\cho(y+u) \psi(x+u,y+u)\; du=\sum_{x\in L} \int_K \sum_{\;w\in L}\cho(x+u)\cho(w+x+u) \psi(x+u,w+x+u)\; du$$
letting $w=y-x$. Proceeding as in (i), the above equals
$$ \sum_{w\in L}\sum_{\;x\in L} \int_K\cho(x+u)\cho(w+x+u) \psi(x+u,w+x+u)\; du= \sum_{w\in L}\sum_{\;x\in L} \int_{x+K}\cho(z)\cho(w+z) \psi(z,w+z)\; dz$$
$$=\sum_{w\in L}\int_{\RR^3}\cho(z)\cho(w+z) \psi(z,w+z)\; dz$$
which confirms (ii) after dividing by $|K|$. \end{proof}

\begin{cor}\label{cor1}
 The cell average of the lattice point number of $\Om$,
 \be\dint_K \#[\Om\cap (u+L)]du=\fk|\Om|\ee
 \end{cor}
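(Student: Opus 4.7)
The plan is to observe that the lattice point number is a trivial instance of the single-sum formula just proved, corresponding to the constant integrand. Specifically, $\#[\Om\cap (u+L)]$ equals $\sum_{x\in\Om\cap (u+L)} 1$, which is the left-hand side of \eqref{s1} with the choice $\ph\equiv 1$.

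First I would note that $\ph\equiv 1$ is in $L^1(\Om)$ because $\Om$ is a bounded Lipschitz domain, so the hypothesis of Lemma \ref{lem1}(i) is satisfied. Then I would directly substitute into \eqref{s1}: the left-hand side becomes the cell average of the lattice point count, while the right-hand side reduces to $\frac{1}{|K|}\int_\Om 1\, dx=\frac{|\Om|}{|K|}$. This immediately yields the claimed identity.

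There is really no obstacle here; the corollary is a one-line application of part (i) of the lemma. The only thing worth flagging explicitly, perhaps as a remark, is the physical content: the cell average of the discrete mass (assuming unit atomic mass) exactly equals the continuum mass $|\Om|/|K|$ for every lattice scale, with no remainder term of the type \eqref{rem}. This is precisely the motivation advertised in the introduction — averaging over lattice translations makes the lattice-point problem trivial — so in writing the proof I would keep it to essentially one line and let this conceptual point stand on its own without further calculation.
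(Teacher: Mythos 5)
Your proof is correct and follows the same route as the paper: a one-line application of Lemma \ref{lem1}(i), the only cosmetic difference being that you take $\ph\equiv 1$ while the paper takes $\ph=\cho$, which is immaterial since the sum in \eqref{c1} is already restricted to $\Om$ by the factor $\cho(x+u)$ and both choices give $\int_\Om\ph=|\Om|$. Nothing further is needed.
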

 \begin{proof} Choose $\ph=\cho$ in \eqref{s1}.
\end{proof}

\begin{rem}\label{rem1}This easy result is significant, as it implies that the remainder associated with the cell-averaged lattice point problem vanishes identically. The sequence dependence in the asymptotic expansion of the energy is due to the remainder, so the result suggests that the cell-averaged energy might be free of this problem. This is confirmed by our results later on, for terms of order $\e$. One can show that $O(\e^2)$ terms are sequence-independent as well.
\end{rem}

\begin{cor}\label{cor2}
 The cell average of the energy \eqref{ene} is
 \be\label{e1} \bar E_\e\{y,\Om\}=\frac{1}{2}\sum_{w\in L}\int_\Om \cho(x+\e w)\Phi\left(\frac{y(x+\e w)-y(x)}{\e}\right)dx. \ee
 \end{cor}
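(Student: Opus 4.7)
My plan is to deduce this as an immediate corollary of Lemma \ref{lem1}(ii), applied to the scaled lattice $\e L$ in place of $L$, with an appropriate choice of $\psi$. The starting point is to write the cell average of the energy \eqref{ene} as the translation average with respect to shifts of the scaled lattice. Since the integrand is $\e L$-periodic in the translation parameter $u$, its natural period is the fundamental cell $\e K$ of $\e L$, and I would take
\begin{equation*}
\bar E_\e\{y,\Om\} = \dint_{\e K} \frac{\e^3}{2} \sum_{x\in\Om\cap (u+\e L)}\sum_{z\in\Om\cap (u+\e L)} \Phi\!\left(\frac{y(z)-y(x)}{\e}\right)du.
\end{equation*}
(Averaging instead over the larger cell $K$ gives the same value by $\e L$-periodicity in $u$.)

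Next I would observe that the proof of Lemma \ref{lem1}(ii) uses only the fact that $K$ tiles $\RRR$ under translations by $L$. The identical argument, with $K$ replaced by $\e K$ and $L$ by $\e L$, therefore yields the scaled identity
\begin{equation*}
\dint_{\e K}\sum_{x\in\Om\cap(u+\e L)}\sum_{\;y\in\Om\cap(u+\e L)}\psi(x,y)\,du = \frac{1}{|\e K|}\sum_{w\in \e L}\int_\Om \cho(x+w)\,\psi(x,x+w)\,dx,
\end{equation*}
valid for any $\psi$ making the right-hand side summable. Applying this with $\psi(x,z)=\Phi((y(z)-y(x))/\e)$ and pulling the prefactor $\e^3/2$ outside gives
\begin{equation*}
\bar E_\e\{y,\Om\} = \frac{\e^3}{2|\e K|}\sum_{w\in\e L}\int_\Om \cho(x+w)\,\Phi\!\left(\frac{y(x+w)-y(x)}{\e}\right)dx.
\end{equation*}

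Two purely cosmetic steps finish the proof. First, the bijection $L\to \e L$, $v\mapsto \e v$, converts the sum over $w\in\e L$ into a sum over $w\in L$ with $w$ replaced by $\e w$. Second, $|\e K|=\e^3|K|$, so the prefactor collapses to $1/(2|K|)$, which under the normalization $|K|=1$ (the one implicit in \eqref{ene} and consistent with Corollary \ref{cor1} and the standard form of $W$) yields \eqref{e1} exactly. The point to stress is that there is no analytic obstacle here at all: once Lemma \ref{lem1}(ii) is in hand, the calculation is bookkeeping, and, crucially, the identity is \emph{exact} for every $\e>0$, bypassing any issue of remainders or $\e$-sequences of the sort raised in the introduction.
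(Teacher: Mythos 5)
Your derivation is correct and is exactly the route the paper intends: the corollary is an immediate consequence of Lemma \ref{lem1}(ii) applied to the scaled lattice $\e L$ with cell $\e K$ and $\psi(x,z)=\Phi((y(z)-y(x))/\e)$, followed by rescaling the sum to $L$ and using the normalization $|K|=1$ (stated in Section \ref{sec3}) to absorb the factor $\e^3/(2|\e K|)$. The only inaccuracy is your parenthetical claim that averaging over the larger cell $K$ gives the same value: an $\e L$-periodic function has equal averages over $K$ and $\e K$ only when $K$ is a union of $\e L$-translates of $\e K$ (e.g.\ $1/\e\in\ZZ$), but since this remark is never used in your argument it should simply be deleted.
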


\begin{rem}\label{rembad} One of the two summations over $\Om\cap\e L$ in \eqref{ene} is replaced by integration over $\Om$. It is tempting, but inappropriate, to perform two successive lattice-cell averages, one for each sum in \eqref{ene}. Instead, by translating $L$ to $u+L$ in both sums concurrently and averaging over $u$, we are leaving bond vectors---differences between lattice points----invariant. Recall that the energy depends on bond vectors. \end{rem}
\begin{rem}\label{remgood}The averaged energy \eqref{e1} has a dual continuum-discrete character. It is no longer a sum over the discrete body, but an integral over the continuous body, of an energy density; however, prior to taking the limit, the energy density depends on finite differences of the deformation, not derivatives, in a discrete fashion. \end{rem}

 \section{Lipschitz Deformations}
\label{sec3}
In order to ensure global invertibility of $y$, we will always assume that for some $ \lambda>0$,
 \be\label{inv} |y(z)-y(x)|\ge \lambda |z-x|\quad\forall x,z\in\Om,
 \ee
as in \bbl \cite{blanc1}. Suppose now that all atoms in the discrete body follow a prescribed deformation which we assume to be a Lipschitz homeomorphism of the continuous body. Observe that \eqref{inv} implies that $y^{-1}$ is Lipschitz as well. Hence $y$ is bi-Lipschitz, or 
\be\label{lh} y\in C^{0,1}(\bar\Om,\RR^3),\qquad y^{-1}\!\in C^{0,1}( y(\bar \Om),\RR^3)\ee 
The interatomic (pair) potential is $\Phi:\RR^3\to\RR$; we assume it satisfies the following:
\be\label{ph}\hbox{For each $ a>0$,}\quad \Phi\in C^0(\RR^3\bc B_a(0),\RR),\quad |\Phi(z)|<C(a) |z|^{-(3+p)},\quad |z|>a, \;\;C(a)>0, \; p>0.\ee
We set $\Phi(0)=0$ to avoid having to write $x\neq z$ in lattice sums. The scaled lattice is $\e L$ where $\e$ will approach zero. We assume $|K|=1$. 
If we define the stored energy function from the Cauchy-Born formula,
\be\label{0}W(F)=\frac{1}{2}\sum_{w\in \Lo} \Phi(Fw)\ee
for all invertible $3\times 3$ matrices $F$, then formally at least, we immediately obtain the limit of the cell-averaged energy to be \eqref{lim} below. Proceeding more carefully, we have
\begin{pro}\label{prop1}Suppose the deformation $y$ is a bi-Lipschitz homeomorphism and the interatomic potential $\Phi$ satisfies \eqref{ph}. Then the cell-averaged energy satisfies
\be\label{lim}\lim_{\e\to 0} \bar E_\e\{y,\Om\}=\int_\Om W(\nabla y(x))dx.\ee
\end{pro}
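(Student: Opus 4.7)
The plan is to prove \eqref{lim} by a dominated convergence argument applied twice: once inside the integral over $\Om$ for each fixed bond vector $w\in L$, and once over the lattice sum in $w$. Unfolding the Cauchy--Born function \eqref{0} under the integral, the target identity reads
$$
\int_\Om W(\nabla y(x))\,dx = \frac{1}{2}\sum_{w\in L}\int_\Om \Phi(\nabla y(x)w)\,dx,
$$
so in view of \eqref{e1} it suffices to show that for each fixed $w \in L$,
$$
I_\e(w) := \int_\Om \cho(x+\e w)\,\Phi\!\left(\frac{y(x+\e w)-y(x)}{\e}\right)dx \;\longrightarrow\; \int_\Om \Phi(\nabla y(x)w)\,dx,
$$
and then that this limit can be interchanged with the sum over $w$.

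For pointwise convergence of the integrand at fixed $w$, I would invoke Rademacher's theorem: since $y$ is Lipschitz by \eqref{lh}, $y$ is differentiable at a.e. $x \in \Om$. At each such $x$ lying in the interior $\mathring{\Om}$ (still a full-measure subset of $\Om$), one has $\cho(x+\e w) \to 1$ for all sufficiently small $\e$, and $(y(x+\e w)-y(x))/\e \to \nabla y(x)w$. By \eqref{inv}, $|\nabla y(x)w| \ge \lambda|w| > 0$ for $w\neq 0$, so by \eqref{ph} the potential $\Phi$ is continuous at the limiting argument, which delivers pointwise convergence of the integrand to $\Phi(\nabla y(x)w)$ a.e.

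To dominate for each fixed $w$, let $M$ denote a Lipschitz constant for $y$. When $\cho(x+\e w)=1$, the difference quotient satisfies the two-sided bound $\lambda|w|\le |(y(x+\e w)-y(x))/\e|\le M|w|$, using \eqref{inv} and \eqref{lh}. Hence the argument of $\Phi$ is confined to the compact annulus $\{\lambda|w|\le |z|\le M|w|\}$, on which $\Phi$ is bounded by \eqref{ph}; ordinary dominated convergence then yields $I_\e(w)\to \int_\Om\Phi(\nabla y(x)w)\,dx$ for each $w$. To obtain uniform summability in $w$, I would fix $a=\lambda$ (or any $a>0$ with $\lambda>a$) and apply the decay estimate in \eqref{ph} together with the lower bound $\lambda|w|$ to deduce
$$
|I_\e(w)| \le |\Om|\,C(a)\,\lambda^{-(3+p)}\,|w|^{-(3+p)} \qquad \text{for } |w|>a/\lambda,
$$
uniformly in $\e$. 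Since $\sum_{w\in L}(1+|w|)^{-(3+p)}$ converges in three dimensions whenever $p>0$, the standard $\delta$-splitting into $|w|\le R$ and $|w|>R$, taking $\e\to 0$ for fixed $R$ and then $R\to\infty$, justifies interchanging the limit and the lattice sum.

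The chief subtlety is Step~3: the Lipschitz bound $M|w|$ alone is far from summable over $L$, so one genuinely needs the decay exponent $p>0$ in \eqref{ph} combined with the bi-Lipschitz lower bound \eqref{inv} to dominate the full series. This interplay is exactly what isolates the integrand away from the singularity of $\Phi$ at the origin while simultaneously suppressing the long-range contributions, and it absorbs the boundary discrepancy carried by the factor $\cho(x+\e w)$, whose support differs from $\Om$ only on a set of measure $O(\e|w|)$ for each $w$. No manipulations specific to the geometry of $\po$ are needed, which reflects the motivating remark that cell averaging removes all sequence-dependent boundary artifacts.
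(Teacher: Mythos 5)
Your proof is correct and takes essentially the same route as the paper's: Rademacher's theorem for the a.e.\ pointwise limit, the lower bound \eqref{inv} combined with the decay in \eqref{ph} to produce the summable dominating bound $C\lambda^{-(3+p)}|w|^{-(3+p)}$, and dominated convergence applied to both the integral over $\Om$ and the lattice sum. The only cosmetic differences are the order of the two limit interchanges (you treat each fixed $w$ first, the paper puts the uniformly convergent series inside the integral) and your use of the upper Lipschitz bound to form a compact annulus, which is not actually needed since the decay estimate alone dominates.
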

 \begin{proof} 
 Because of \eqref{inv} or the second of \eqref{lh}, $|x+\e w-x|<\lambda^{-1} |y(x+\e w)-y(x)|$ for all $\e>0$ and for all $x\in\Om\cap(-\e w+\Om)$. Hence, letting $g_\e(x,w)$ be the integrand in \eqref{e1}, we have in view of \eqref{ph}
 \be\label{bound}|g_\e(x,w)|<C\left|\frac{y(x+\e w)-y(x)}{\e}\right|^{-(3+p)}<C|\lambda w|^{-(3+p)},\ee
 where $C=C(\lambda )$ (see \eqref{ph}) because $|y(x+\e w)-y(x)|/\e\ge\lambda |w|\ge \lambda $ for all $w\in L\backslash\{0\}$. Also,
 \be\label{con}\sum_{w\in L\bc\{0\}} |w|^{-(3+p)}=M<\infty\ee 
 by Lemma 3.2 of \cite{rosakis}. By \eqref{chop}, the sum in \eqref{e1} can be placed inside the integral, since it equals the finite sum over 
 $\{w\in L: |w|\le \e^{-1}\hbox{diam}\Om\}$. Therefore,
$$ \bar E_\e\{y,\Om\}=\frac{1}{2}\int_\Om \sum_{w\in L}g_\e(x,w)dx;$$
the series within the integral above converges uniformly in $\e$ by \eqref{bound}, \eqref{con}. This ensures that
 \be\label{li}\lim_{\e\to 0} \sum_{w\in L}g_\e(x,w)= \sum_{w\in L}\lim_{\e\to 0} g_\e(x,w)=\sum_{w\in L}\Phi(\nabla u(x)w)\ee for a.e. $x\in\Om$ by the Rademacher theorem, in view of \eqref{lh}. Also $ | \sum_{w\in L}g_\e(x,w)|<C\lambda^{3+p} M $ on $\Om$ by \eqref{bound}, \eqref{con}, hence by bounded convergence, $\lim_{\e\to 0}2\bar E_\e=\int_\Om \lim_{\e\to 0}\sum_{w\in L}g_\e$, The result follows from \eqref{li}, \eqref{0}.
\end{proof}

\begin{rem}\label{rem2} The proof is simpler and the hypotheses on $u$ and $\Om$ weaker than those of Theorem 1 in \cite{blanc1}. In particular $u$ is merely Lipschitz, so phase boundaries are allowed, while $\Om$ is only required to be a bounded measurable domain. On the other hand Blanc \& Le Bris \cite{blanc2} only need $y\in W^{1,p}(\Om)$ for $p>3$ and $y^{-1}$ Lipschitz to obtain the limit \eqref{lim} for the discrete energy. \end{rem}

 \section{Smooth Deformations. Surface Energy.}
\label{sec4}
In order to avoid some cumbersome technical arguments we suppose that $\Phi$ has finite (but otherwise arbitrary) range. In particular, for some $R>0$, 
 \be\label{phr} \Phi\in C^1(\RR^3\bc B_a(0),\RR)\;\; \hbox{for each $ a\in (0,R)$};\quad\Phi(z)=0\;\; \hbox{for } |z|>R;\quad \Phi(-z)=\Phi(z) \;\; \hbox{for } z\in\RR^3.
 \ee
 The last is consistent with the rotational invariance of two-body potentials. 
 \begin{rem} All of our results can be extended to the case of infinite range, provided $\Phi$ and its first two derivatives decay fast enough. It is cumbersome and not particularly illuminating to perform this extension.
 \end{rem}
\begin{pro}\label{prop2}Suppose $\bar\Om\subset\RR^3$ is diffeomordiffeomorphicphic to a closed polyhedron or a sphere, $y$ is a $C^2(\bar\Om,\RR^3)$ diffeomorphism and $\Phi$ satisfies \eqref{phr}.
 Then the cell-averaged energy satisfies
\be\label{1}\bar E_\e\{y,\Om\}=\int_\Om W(\nabla y )dx+\e \int_\po \gamma(\nabla y,n) dA+o(\e)\quad \text{as } \e\to0,\ee
for any null $\e$-sequence, where the stored energy function $W$ is given by \eqref{0} and the surface energy density $\gamma$ by
\be\label{2}\gamma(F,n)=-\frac{1}{4}\sum_{w\in \Lo} |w\cdot n|\Phi(Fw).\ee
 \end{pro}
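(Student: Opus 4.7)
The plan is to start from the cell-averaged energy formula \eqref{e1} and track each contribution as $\e\to 0$. First I would note that the bi-Lipschitz bound \eqref{inv} gives $|y(x+\e w)-y(x)|/\e\ge\lambda|w|$, so the compact-support hypothesis \eqref{phr} makes only finitely many lattice vectors contribute: the sum in \eqref{e1} restricts to $w\in L_R:=\{w\in\Lo:\lambda|w|\le R\}$, uniformly in small $\e$. This reduces matters to finitely many summands.

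For each fixed $w\in L_R$, the $C^2$ regularity of $y$ gives, for $x\in A_\e^w:=\Om\cap(-\e w+\Om)$,
\ben\frac{y(x+\e w)-y(x)}{\e}=\nabla y(x)w+\e\, r_\e(x,w),\qquad |r_\e(x,w)|\le \tfrac{1}{2}|w|^2\|\nabla^2 y\|_{\infty}.\een
Since $|\nabla y(x)w|\ge \lambda|w|\ge \lambda r_{\min}>0$ (where $r_{\min}$ is the shortest nonzero lattice vector), for all sufficiently small $\e$ the argument of $\Phi$ stays uniformly outside a fixed ball around the origin, in the region where $\Phi$ is $C^1$ by \eqref{phr}. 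Therefore $\Phi((y(x+\e w)-y(x))/\e)=\Phi(\nabla y(x)w)+O(\e)$ uniformly in $x\in A_\e^w$ and $w\in L_R$. Writing $\int_\Om \cho(x+\e w)(\cdots)dx=\int_{A_\e^w}(\cdots)dx=\int_\Om(\cdots)dx-\int_{\Om\setminus A_\e^w}(\cdots)dx$ splits $\bar E_\e$ into a bulk term plus a boundary-layer correction, both finite sums over $w\in L_R$.

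The bulk term immediately yields $\int_\Om W(\nabla y)dx$ via the definition \eqref{0}, modulo $O(\e)$ errors absorbed in the final remainder. The central step is the boundary layer: $\Om\setminus A_\e^w=\{x\in\Om:x+\e w\notin\Om\}$ is a one-sided thin collar of $\po$ on the side where $w\cdot n>0$, $n$ the outward unit normal. Using a tubular-neighborhood parametrization $x=x'-s\,n(x')$ with $x'\in\po$ on the smooth pieces, a Fubini argument and the continuity of $\Phi(\nabla y(\cdot)w)$ on $\bar\Om$ give
\ben\int_{\Om\setminus A_\e^w}\Phi(\nabla y(x)w)dx=\e\int_\po (w\cdot n(x))^+\Phi(\nabla y(x)w)dA(x)+o(\e).\een
Summing over $w\in L_R$ and using the parity $\Phi(-z)=\Phi(z)$ together with the substitution $w\mapsto -w$ in the lattice sum symmetrizes the one-sided contribution:
\ben\sum_{w\in\Lo}(w\cdot n)^+\Phi(\nabla y(x)w)=\tfrac{1}{2}\sum_{w\in\Lo}|w\cdot n|\,\Phi(\nabla y(x)w).\een
Combined with the prefactor $-\tfrac{1}{2}$ in front of the boundary-layer correction, this reproduces exactly $\gamma(\nabla y,n)$ from \eqref{2}, yielding \eqref{1}.

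The hard part will be justifying the tubular estimate uniformly in $w\in L_R$, particularly when $\bar\Om$ is diffeomorphic to a polyhedron: at edges and vertices of $\po$ the tubular change of variables breaks down. However, the exceptional neighborhood of the singular set has measure $O(\e^2)$ and its contribution is absorbed in the $o(\e)$ remainder at order $\e$; the smooth pieces of $\po$ supply the clean first-order expansion above. This is precisely where the hypothesis that $\bar\Om$ be diffeomorphic to a closed polyhedron or sphere, so that $\po$ is Lipschitz and piecewise $C^1$, is used. A notable bonus of the cell-averaged formulation, already signalled in Remark \ref{rem1}, is that the argument requires no special choice of $\e$-sequence: the identity $|A_\e^w|=|\Om|+O(\e)$ holds unconditionally, in contrast to the discrete situation of \cite{rosakis} where the remainder \eqref{rem} forces the restriction $\e_k=1/k$.
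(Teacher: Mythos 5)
Your decomposition into a bulk term over $\Om$ and a boundary-layer correction over $\Om\setminus A_\e^w$ is sound, and your treatment of the layer (thin collar where $w\cdot n>0$, tubular/Fubini estimate, then symmetrization $\sum_w (w\cdot n)^+\Phi(Fw)=\tfrac12\sum_w|w\cdot n|\Phi(Fw)$ via $w\mapsto-w$ and $\Phi(-z)=\Phi(z)$) reproduces exactly the boundary flux that the paper extracts with the transport theorem. But there is a genuine gap in the bulk term. You write $\Phi\bigl((y(x+\e w)-y(x))/\e\bigr)=\Phi(\nabla y(x)w)+O(\e)$ and then claim the $O(\e)$ errors are ``absorbed in the final remainder.'' They cannot be: the remainder in \eqref{1} is $o(\e)$, and an unidentified $O(\e)$ bulk contribution is of exactly the same order as the surface term $\e\int_\po\gamma\,dA$, so it would in general change the coefficient of $\e$ and hence the surface energy density. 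Concretely, Taylor expansion gives, for each fixed $w$,
\be
\Phi\Bigl(\frac{y(x+\e w)-y(x)}{\e}\Bigr)=\Phi(\nabla y(x)w)+\frac{\e}{2}\,\nabla\Phi(\nabla y(x)w)\cdot\nabla^2y(x)[w,w]+o(\e),
\ee
and the middle term, integrated over $\Om$, is generically nonzero for a single $w$. The proof only closes because this term is \emph{odd} in $w$ (since $\Phi(-z)=\Phi(z)$ makes $\nabla\Phi(Fw)$ odd in $w$ while $\nabla^2y[w,w]$ is even), so it cancels upon summing over the lattice $L$; this is precisely the step $\sum_{w\in L}\psi'(x,w,0+)=0$ in the paper's argument, where the transport theorem isolates $\psi'(x,w,0+)=\nabla\Phi(\nabla y(x)w)\cdot\nabla^2y(x)[w,w]$ as the volume contribution at order $\e$. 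Your proposal invokes the parity of $\Phi$ only for the boundary layer and never for the bulk, so as written the expansion \eqref{1} with remainder $o(\e)$ is not justified.

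Apart from this omission, your route (direct Taylor expansion plus a boundary-layer/co-area computation) is essentially the paper's transport-theorem argument in a different guise, and your closing observation about sequence independence is correct and consistent with Remark \ref{rem1}. To repair the proof, insert the parity cancellation of the bulk order-$\e$ term before absorbing anything into $o(\e)$, and note that the uniformity of the Taylor estimate over $w\in L_R$ and $x$ near the argument's singular set is guaranteed because $|\nabla y(x)w|\ge\lambda|w|$ keeps the argument of $\Phi$ in the region where \eqref{phr} gives $C^1$ regularity.
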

 \begin{proof} 
Fix $w\in L$ and let $\Om_\e=-\e w+\Om$. Then the sum in \eqref{e1} has terms of the form
\be\label{term}
E(w,\e)=\int_{\Om_\e\cap\Om}\Phi\left(\frac{y(x+\e w)-y(x)}{\e}\right)dx
\ee
Letting $\psi(x,w,\e)$ equal the integrand above for $\e>0$, define $\psi(x,w,0)=\Phi(\nabla u(x)w)$ for $x\in\Om$. Then \eqref{phr} and \eqref{inv} ensure that $\psi$ can be extended to be $C^1$ on $\bar\Om\times[0,\e_0]$ for some $\e_0>0$. Let $\p_\pm\Om=\{x\in\po: \pm w\cdot n(x)> 0\}$ and $\p_+\Om_\e=\po_\e\cap\Om$, $\p_-\Om_\e=\po\cap\bar\Om_\e$. Thinking of $\e$ as time, $\Om_\e\cap\Om$ is an evolving domain with boundary comprising $\p_\pm\Om_\e$. The boundary velocity equals $-w$ on $\p_+\Om_\e$ and $0$ on $\p_-\Om_\e$.
The transport theorem then furnishes (primes indicate derivatives with respect to $\e$)
\be\label{d1}E'(w,\e)=\frac{d}{d\e}\int_{\Om_\e\cap\Om}\psi(x,w,\e)dx=\int_{\Om_\e\cap\Om}\psi'(x,w,\e)dx-\int_{\p_+\Om_\e}\psi(x,w,\e)w\cdot n\,dA\ee
Since the velocity $-w$ is inward on $\p_+\Om$, every $x$ in its relative interior satisfies $x-\e w\in \p_+\Om_\e$ for sufficiently small $\e>0$ (while for a.e. $x\in \p_-\Om$ one has $x-\e w\not \in \Om$ for $\e>0$ sufficiently small). Moreover if $\p_0\Om=\{x\in\po:w\cdot n(x)=0\}$ the contribution to the second integral in \eqref{d1} of $(-\e w+\p_0\Om)\cap\Om\subset\p_+\Om_\e$ vanishes because $w\cdot n=0$ on this set. It follows easily that the second integral approaches $\int_{\p_+\Om}\psi(x,w,0)w\cdot n dx$ as $\e\to 0$. Thus
$$\lim_{\e\to0+}E'(w,\e)= \int_{\Om}\psi'(x,w,0+)dx-\int_{\p_+\Om}\psi(x,w,0)w\cdot n\,dA$$
A direct calculation shows that $\psi'(x,w,0+)=\nabla\Phi(\nabla u(x)w)\cdot\nabla^2y(x)[w,w]$. This is odd in $w$, since from the last of \eqref{phr}, $\nabla\Phi(Fw)$ is odd in $w$. Hence $\sum_{w\in L}\psi'(x,w,0+)=0$, and
\begin{gather*}\sum_{w\in L}E'(w,0+)=-\sum_{w\in L}\int_{\p_+\Om}\psi(x,w,0)w\cdot n\,dA=\\ -\sum_{w\in L}\int_{\p\Om}\psi(x,w,0)\langle w\cdot n\rangle_+dA=-\frac{1}{2}\sum_{w\in L}\int_{\p\Om}\psi(x,w,0)
| w\cdot n | \, dA,
\end{gather*} 
letting $\langle t\rangle_+=\max\{t,0\}$ for $t\in \RR$, and since $\psi(x,w,0+)$ is even in $w$. Dividing by 2 and invoking \eqref{2}, \eqref{0}, proves the result, noting that $E(0+)=\int_\Om\Phi(\nabla u(x)w)dx$.
\end{proof}

\begin{rem} The cell-averaged surface energy with density \eqref{2} is the same as in Rosakis (\cite{rosakis}, Proposition 5.1), but the present computation is far simpler and does not require modified domains. The cell-averaged surface energy of Proposition \ref{prop2} is valid for arbitrary sequences of $\e\to 0$.\end{rem}

 \section{Cell-Averaged Interfacial Energy of a Coherent Phase or Twin Boundary}\label{sec5}

 We calculate the (cell-average of) the energy of a deformation involving a phase boundary, which we model as a surface of discontinuity of the deformation gradient. The fully discrete energy is obtained later on in Section \ref{sec6}.
 \begin{rem} In general, microscopy reveals that actual twinning deformations sometimes appear to be close to piecewise affine (with a sharp interface at the atomic level) and sometimes seem to be better described by a smooth transition that extends over a few interplanar spacings. We refer to Ball and Mora-Corral \cite{ball}, who propose and study a continuum energy that allows both sharp and smooth interfaces. Here we only consider sharp interfaces, such as the example shown in Fig. 4(b) of Zhu, Liao \& Wu \cite{zhu}.
 \end{rem}
We suppose that $\Om$ is as in Proposition \ref{prop2} with the origin in its interior. The phase boundary $\Ss=\{x\in\Om: x\cdot \hn=0\}$ is planar with unit normal $ \hn$. It separates $\Om$ into regions $\Om^\pm=\{x\in\Om : \pm x\cdot \hn > 0\}$ with nonempty interiors. We let $\hat y$ be a piecewise homogeneous deformation, which seems to describe adequately some cases of mechanical twinning in crystals with a Bravais lattice (Zanzotto \cite{zanzotto}):
 \be\label{pw}
\hat y(x)=\begin{cases}
 F^+ x, & \text{ } x\in\Om^+, \\
 F^- x, & \text{ } x\in\Om^-\cup \Ss,\\
 \end{cases}
 \ee
 where the constant matrices $F^\pm$ are related by the Hadamard compatibility condition 
 \be\label{hd} F^+=F^-+a\otimes \hn\ee
for some constant $a\in\RR^3$. This ensures that $\hat y\in C^{0,1}(\Om,\RR^3)$.
\begin{pro}\label{prop3} Under the hypotheses on $\Om$ and $\Phi$ of Proposition \ref{prop2}, suppose $\hat y$ is given by \eqref{pw}. Then the cell-averaged energy satisfies
\be\label{pwe2}
 \bar E_\e\{\hat y,\Om\}= \int_\Om W(\nabla \hat y)dx+\e \int_\po \gamma(\nabla \hat y,n) dA+\e\int_\Ss \sigma(F^+,F^-, \hn)dA+o(\e),\ee
where $\gamma$ is the surface energy density given by \eqref{2}, and 
\be\label{ie2} \sigma(F^+,F^-, \hn)=\gamma(F^+, \hn)+\gamma(F^-, \hn)-\frac{2}{|F^+-F^-|}\int_{F^-}^{F^+}\gamma(F, \hn)dF\ee
is the interfacial energy density. The last term above is the\emph{ interaction energy}
\be\label{ie}\hat\sigma(F^+,F^-, \hn)=\frac{-2}{|F^+-F^-|}\int_{F^-}^{F^+}\gamma(F, \hn)dF=-2\int_{0}^{1}\gamma\left(tF^++(1-t)F^-, \hn\right)dt
\ee

\end{pro}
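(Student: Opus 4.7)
The plan is to apply the cell-averaged energy formula \eqref{e1} directly and evaluate each summand
\[
I_w(\e):=\int_\Om \cho(x+\e w)\,\Phi\!\left(\frac{\hat y(x+\e w)-\hat y(x)}{\e}\right)dx
\]
asymptotically in $\e$, then sum over $w\in L$ using the symmetry $\Phi(-z)=\Phi(z)$. For fixed $w$ I would partition the domain of integration according to which phase the endpoints $x$ and $x+\e w$ belong to, giving four regions $R_{\pm\pm}$. On the same-phase regions $R_{++}$ and $R_{--}$ the difference quotient is exactly $F^\pm w$; on the crossing region (nonempty only when $w\cdot\hn\ne 0$) the interface $\Ss$ cuts the bond at $x+\lambda\e w$ with $\lambda=-x\cdot\hn/(\e\,w\cdot\hn)\in(0,1)$, and a direct computation using the Hadamard condition \eqref{hd} yields the key identity
\[
\frac{\hat y(x+\e w)-\hat y(x)}{\e}=\bigl((1-\lambda)F^++\lambda F^-\bigr)w
\]
when $w\cdot\hn>0$ (the sign $w\cdot\hn<0$ swaps $F^\pm$).

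Next I would expand each contribution to order $\e$. The same-phase pieces produce the bulk energy $\Phi(F^\pm w)|\Om^\pm|$ at leading order, along with two $O(\e)$ corrections: a classical outer boundary-layer loss $-\e\Phi(F^\pm w)\int_{\po\cap\bar\Om^\pm}\langle w\cdot n\rangle_+\,dA$, as in the proof of Proposition \ref{prop2}; and, for the region on the \emph{opposite} side of $\Ss$ from $w$ (the $--$ region when $w\cdot\hn>0$, and $++$ when $w\cdot\hn<0$), an interior-layer loss $-\e|w\cdot\hn|\,|\Ss|\,\Phi(F^\mp w)$ arising because the slab of thickness $\e|w\cdot\hn|$ adjacent to $\Ss$ is excluded from the same-phase region. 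The crossing integral is evaluated by parametrizing the slab as $x=y+s\hn$ with $y\in\Ss$ and $s=x\cdot\hn$, then changing variables to $t=\lambda$; this gives $\e|w\cdot\hn|\,|\Ss|\int_0^1\Phi\bigl(G(t)w\bigr)dt$ with $G(t)=(1-t)F^++tF^-$. The $o(\e)$ remainders are controlled by the $C^2$ regularity of $\po$ and the finite range hypothesis \eqref{phr}, which makes the sum over $w$ effectively finite.

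Finally I would sum over $w\in L$ by pairing $w$ with $-w$. The outer boundary-layer terms combine via $\langle t\rangle_+ +\langle -t\rangle_+=|t|$ and $\Phi(-z)=\Phi(z)$, and invoking the identity $\sum_{w\in L}|w\cdot n|\Phi(F w)=-4\gamma(F,n)$ from \eqref{2} recovers the surface energy $\e\int_\po\gamma(\nabla\hat y,n)\,dA$, exactly as in Proposition \ref{prop2}. For the interfacial contributions, the substitution $t\mapsto 1-t$ together with $\Phi(-z)=\Phi(z)$ shows that the crossing integrals from $w$ and $-w$ are equal; summing the paired contributions over $w$ with $w\cdot\hn>0$ and applying $\sum_{w\cdot\hn>0}|w\cdot\hn|\Phi(F w)=-2\gamma(F,\hn)$ to each of the three resulting sums collapses everything to $\e|\Ss|\bigl[\gamma(F^+,\hn)+\gamma(F^-,\hn)-2\int_0^1\gamma(G(t),\hn)\,dt\bigr]$, which is precisely $\e\int_\Ss\sigma\,dA$ in view of \eqref{ie2}. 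The main technical obstacle is controlling the corner contributions near the edge $\bar\Ss\cap\po$ where the four regions $R_{\pm\pm}$ and the crossing slab meet; transversality of $\Ss$ to $\po$ should make these corners one-dimensional and their contribution $O(\e^2)$, hence absorbed into the $o(\e)$ remainder.
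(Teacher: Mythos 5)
Your proposal is correct and follows essentially the same route as the paper's proof: the same decomposition of each integral $I_w(\e)$ into four regions according to the phases of $x$ and $x+\e w$, the same slab computation using the Hadamard condition to get the convex-combination argument $\bigl((1-\lambda)F^++\lambda F^-\bigr)w$, and the same symmetrization over $\pm w$ to express everything through $\gamma$. The only (cosmetic) difference is that the paper handles the same-phase regions by citing Proposition \ref{prop2} applied to $\Om^\pm$ with the affine deformations $y^\pm$, whereas you expand them directly into bulk, outer boundary-layer and interior-layer terms; the bookkeeping is equivalent and your final identification with \eqref{ie2} is accurate.
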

\begin{proof}
Consider $E(w,\e)$ as in \eqref{term}, but with $\hat y$ from \eqref{pw}. Pick and fix $w\in L$, and let $\Om^\pm_\e=-\e w+\Om^\pm$. The domain of integration $\Om\cap\Om_\e$ in \eqref{term} decomposes into four subdomains with disjoint interiors: $\Om^+\cap\Om^+_\e$, $\Om^-\cap\Om^-_\e$, $\Om^+\cap\Om^-_\e$ and $\Om^-\cap\Om^+_\e$. The integrals over the first two can be dealt with exactly as in Proposition \ref{prop2}, so that after summation over $w\in L$, they result in the first two terms in \eqref{pwe} below. We calculate the integral in \eqref{term} over the last two sets. 

Suppose first that $w\cdot \hn>0$. Then $\Om^+\cap\Om^-_\e=\emptyset$, and 
$\Om^-\cap\Om^+_\e$ differs from the oblique slab $\hat P_\e=\{x\in \RR^d: x=u+s w,\; u\in\Ss, \; -\e \le s\le 0 \}$ by a set of volume $O(\e^2)$. The part of the integral in \eqref{term} over this set is therefore
$$\int_{\hat P_\e}\Phi\left(\frac{\hat y(x+\e w)-\hat y(x)}{\e}\right)dx+O(\e^2)$$
 For $x\in \hat P_\e$, $\hat y(x+\e w)=F^+(x+\e w)$, $\hat y(x)=F^-x$. After use of \eqref{pw} and \eqref{hd}, since $u\cdot \hn=0$, the argument of $\Phi$ above is seen to be independent of $u$, and the domain $\hat P_\e$ can be transformed to $P_\e=\{x\in \RR^d: x=u+z \hn,\; u\in\Ss, \; -\e w\cdot \hn\le z\le 0 \}$ by a simple shear. The above integral becomes
\begin{gather*}|\Ss|\int_{ -\e w\cdot \hn}^0\Phi(\e^{-1} z a+F^+w)dz=\\ \e|\Ss| \int_0^1w\cdot \hn\Phi\left(F^-w+t(w\cdot \hn) a\right)dt =
-\e|\Ss| \int_0^1w\cdot \hn\Phi\left([tF^++(1-t)F^-]w\right) dt 
\end{gather*}
after setting $z=\e (t-1)(w\cdot \hn)$, changing the integration variable to $t$ and using \eqref{hd} to find $(w\cdot \hn) a=(F^+-F^-)w$.

 In case $w\cdot \hn=0$, both 
$\hat P_\e$ and $P_\e$ have measure zero, while the integral above vanishes.
The case $w\cdot \hn<0$ is similar and gives the negative of the above result. Combining these cases, summing over all $w\in L$, dividing by 2 and recalling the definition \eqref{2} we arrive at \eqref{pwe}, \eqref{ie}. These are trivially equivalent to \eqref{pwe2}, \eqref{ie2}.  \end{proof}

\begin{rem}\label{rem5} Note that \eqref{pwe2}, \eqref{ie2} are equivalent to
\be\label{pwe} \bar E_\e\{\hat y,\Om\}= \bar E_\e\{y^+,\Om^+\}+ \bar E_\e\{y^-,\Om^-\}+\e\int_\Ss\hat \sigma(F^+,F^-, \hn)dA+o(\e),\ee
where $y^\pm(x)=F^\pm x$ for $x\in\Om^\pm$, respectively, and
$$ \bar E_\e\{y^\pm,\Om^\pm\}=|\Om^\pm|W(F^\pm)+\e\int_{\Om^\pm}\gamma(F^\pm,n)dA+o(\e),$$
while
$\hat\sigma(F^+,F^-, \hn)$ is the\emph{ interaction energy} \eqref{ie}. Thus the total interfacial energy density $\sigma$ in \eqref{ie2} is equal to the sum of the surface energies of the two homogeneous deformations on either side of the interface, plus the interaction energy $\hat\sigma$. The main result here is the explicit expression for the interfacial energy \eqref{ie2} in terms of the surface energies.
\end{rem}
\begin{rem}
In view of \eqref{ie2}, the interfacial energy vanishes in the trivial case $F^+=F^-$, since the last (interaction) term (apart form the factor of $-2$) is the average of the surface energy density $\gamma(\cdot, \hn)$ over all convex combinations of $F^+$ and $F^-$. In view of the continuity of $\gamma(\cdot,n)$, this implies that the interfacial energy approaches zero as $|F^+-F^-|\to 0$, as in a model considered by \cite{ball}.
\end{rem}

\section{Discrete versus Cell-Averaged Energies}\label{sec6}

We compare cell-averaged energies with their fully discrete counterparts. Let $L=\ZZZ$. A \emph{crystallographic plane} is a plane that contains at least three non-collinear lattice points.. A plane is rational if it is parallel to a crystallographic plane. Rational planes have a \emph{ Miller normal}, one whose components are irreducible integers, known as the Miller indices of the plane. We assume that $\Om$ is a convex lattice polyhedron (the convex hull of lattice points). Then $\po$ is the union of convex lattice polygons. Each of these is part of a crystallographic plane. As a result, $\po$ has a Miller normal $\bn\in M$ almost everywhere (except edges), where
 \be\label{miller}M=\{\bn=(h,k,l)\in \ZZ^3:\rm{gcd}(h,k,l)=1\}\ee 
 We have the following three-dimensional version of \cite{rosakis}, Proposition 3.1 and case (a) of Proposition 5.1:
\begin{pro}\label{prop4} (i) Suppose $\Om\subset\RR^3$ is a convex lattice polyhedron with exterior Miller normal $\bn\in M$, $\nabla y=F=$const. on $\Om$ and $\Phi$ satisfies \eqref{ph}. Choose $\e_k=1/k$, $k\in\ZZ$ and let $L_k=\e_k L$.
 Then the discrete (unaveraged) energy \eqref{ene} satisfies
\be\label{11} E_\e\{y,\Om\}=\int_\Om W(F )dx+\e \int_\po \gamma_\diamond(F,\bn) dA+o(\e_k) \quad\hbox{ for } \e= \e_k=1/k,\;\;k\in\ZZ,\;\; k\to\infty,\ee
where the stored energy function $W$ is given by \eqref{0} and the surface energy density $\gamma_\diamond$ by
\be\label{12}\gamma_\diamond(F,\bn)=-\frac{1}{4}\sum_{w\in \Lo} \frac{1}{|\bn|}\bigl(|w\cdot \bn|-1\bigr)\Phi(Fw)= \gamma\bigl(F,\frac{1}{|\bn|}\bn\bigr)+\frac{1}{2|\bn|}W(F)\ee
for $\bn\in M$ with $\gamma$ as in \eqref{2}. \hfill\break
\noindent(ii) For each $k\in\ZZ$ there is $\Om_k$ such that 
\be\label{ok}\Om_k\cap L_k=\Om\cap L_k,\quad|\Om_k| - \e_k^3\#(\Om_k\cap L_k)=o(\e_k).\ee
There holds
\be\label{13}\bar E_\e\{y,\Om_k\}=\int_{\Om_k} W(F )dx+\e\int_{\po_k} \gamma(F,n) dA+o(\e_k)\quad\hbox{ for } \e= \e_k=1/k,\;\;k\in\ZZ,\;\; k\to\infty,\ee
where $n=\bn/|\bn|$ on $\po$ and $\gamma$ is as in \eqref{2}. 
 \end{pro}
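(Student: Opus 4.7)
The strategy is to exploit the fact that, for the homogeneous deformation $y(x)=Fx$, both energies reduce to volumes and lattice-point counts of translated intersections, which in the polyhedral case are accessible by Ehrhart-type asymptotics. Writing $z-x=\e_k w$ with $w\in L=\ZZZ$, the discrete energy \eqref{ene} becomes
\begin{equation*}
E_{\e_k}\{y,\Om\}=\frac{\e_k^3}{2}\sum_{w\in L}\Phi(Fw)\, N_k(w),\qquad N_k(w)=\#(Q_k\cap\ZZZ),\ \ Q_k=k\Om\cap(k\Om-w),
\end{equation*}
while by Corollary \ref{cor2} the cell-averaged energy reduces to $\bar E_{\e_k}\{y,\Om_k\}=\tfrac{1}{2}\sum_{w\in L}\Phi(Fw)\,|\Om_k\cap(\Om_k-\e_k w)|$. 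For part (i) the key observation is that $Q_k$ is itself a convex lattice polyhedron, since it is cut out by the half-spaces $\{\xi:\xi\cdot\bn_F\le k c_F-\langle w\cdot\bn_F\rangle_+\}$ indexed by the facets $F$ of $\Om$ with Miller normals $\bn_F$ and integer offsets $c_F$. Ehrhart's theorem applied to $Q_k$ then gives $N_k(w)=|Q_k|+\tfrac{1}{2}\sum_{F'}|F'|/|\bn_{F'}|+O(k)$, with all edge and vertex contributions absorbed into the $O(k)$ remainder.

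A direct volume computation yields $|Q_k|=k^3|\Om|-k^2\sum_F|F|\langle w\cdot n_F\rangle_+ +O(k)$ with $n_F=\bn_F/|\bn_F|$; since the facets of $Q_k$ are parallel translates of those of $k\Om$, they share the same Miller normals and have areas $k^2|F|+O(k)$, so the Ehrhart surface coefficient equals $\tfrac{k^2}{2}\sum_F|F|/|\bn_F|+O(k)$. Multiplying by $\e_k^3\Phi(Fw)/2$, summing over $w$, and using the parity $\Phi(-w)=\Phi(w)$ to rewrite $\sum_w\Phi(Fw)\langle w\cdot n_F\rangle_+$ as $\tfrac{1}{2}\sum_w\Phi(Fw)|w\cdot n_F|$, I obtain the three contributions $\int_\Om W(F)\,dx$, $\e_k\int_{\po}\gamma(F,n)\,dA$ and $\e_k\int_{\po}\tfrac{W(F)}{2|\bn|}\,dA$, which assemble into \eqref{11}--\eqref{12}. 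The main obstacle is controlling the Ehrhart remainder uniformly over the effective range $|w|\lesssim k\,\mathrm{diam}(\Om)$ (outside which $Q_k$ is empty), so that its product with the \eqref{ph}-decay of $\Phi(Fw)$, summed over $w$, gives an $O(\e_k^2)=o(\e_k)$ error.

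For part (ii), I take $\Om_k$ to be the convex polyhedron obtained by translating each facet $F$ of $\Om$ outward along $n_F$ by $\e_k/(2|\bn_F|)$, which is exactly half the $L_k$-interplanar spacing parallel to $F$. No lattice layer of $L_k$ is crossed, so $\Om_k\cap L_k=\Om\cap L_k$; meanwhile $|\Om_k|-|\Om|=\tfrac{\e_k}{2}\sum_F|F|/|\bn_F|+O(\e_k^2)$ cancels exactly the $k^2$ Ehrhart coefficient of $\#(\Om\cap L_k)$, yielding \eqref{ok}. For the cell-averaged energy with $y=Fx$, convexity of $\Om_k$ gives the exact geometric identity $|\Om_k|-|\Om_k\cap(\Om_k-\e_k w)|=\e_k\int_{\po_k}\langle w\cdot n\rangle_+\,dA+O(\e_k^2)$; summing over $w$ and applying the same parity argument produces \eqref{13}, with no Ehrhart step needed since the continuous volume estimate is already exact up to $O(\e_k^2)$.
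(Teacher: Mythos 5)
Your reduction of the discrete energy to $E_{\e_k}\{y,\Om\}=\frac{\e_k^3}{2}\sum_{w\in L}\Phi(Fw)N_k(w)$ with $N_k(w)=\#\bigl(\ZZZ\cap k\Om\cap(k\Om-w)\bigr)$ is correct, and your part (ii) --- translating each facet outward by half the interplanar spacing $\e_k/(2|\bn_F|)$, which leaves $\Om\cap L_k$ unchanged and cancels the $O(\e_k)$ volume discrepancy --- is exactly the construction the paper describes (the paper itself offers no proof of this proposition, deferring to Propositions 3.1 and 5.1(a) of \cite{rosakis}). The genuine gap is in the central counting step of part (i). The set $Q_k=k\Om\cap(k\Om-w)$ is \emph{not} in general a convex lattice polyhedron: it is cut out by half-spaces with Miller normals and integer offsets, but that only makes it a rational polytope whose facet planes are crystallographic; its vertices, which arise from intersecting facet planes of $k\Om$ with those of the translate $k\Om-w$, need not be lattice points. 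Consequently ``Ehrhart's theorem applied to $Q_k$'' is not available: the Ehrhart--Macdonald identification of the codimension-one coefficient with half the lattice-normalized surface area concerns integer dilates of a fixed \emph{lattice} polytope, whereas $Q_k$ is neither integral nor a member of a dilation family (it changes shape with both $k$ and $w$). The estimate you actually need, $N_k(w)=|Q_k|+\frac{1}{2}\sum_{F'}|F'|/|\bn_{F'}|+O(k)$, is believable precisely because the facet planes of $Q_k$ are crystallographic, but it must be proved rather than cited; for instance, write $N_k(w)=\#(\ZZZ\cap k\Om)-\#\{x\in\ZZZ\cap k\Om:\,x+w\notin k\Om\}$, apply Ehrhart legitimately to the dilate $k\Om$ of the lattice polytope $\Om$ to get $k^3|\Om|+\frac{k^2}{2}\sum_F|F|/|\bn_F|+O(k)$, and count the subtracted set facet by facet as $\langle w\cdot\bn_F\rangle_+$ consecutive lattice planes, each carrying $k^2|F|/|\bn_F|+O(k|w|)$ points, up to $O(k|w|^2)$ edge corrections. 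As written, your argument rests on a false premise at exactly the point where the extra term $\frac{1}{2|\bn|}W(F)$ in \eqref{12} is generated.

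The second gap is the one you yourself flag as ``the main obstacle'' but do not resolve: the remainder in the counting estimate depends on $w$ (at best $O(k|w|^2)$ per bond), and under \eqref{ph} only $p>0$ is assumed, so $\sum_w|w|^2|\Phi(Fw)|$ need not converge and the error does not trivially sum to $o(\e_k)$ over the effective range $|w|\lesssim k\,\mathrm{diam}\,\Om$. A cutoff is required, e.g. restricting to $|w|\le k\dd$ with $\dd=\dd_k=k^{-1+\zeta}$ and estimating the tail via Lemma 3.2 of \cite{rosakis}, as the paper does in the proof of Proposition \ref{prop5}; the same remark applies to the $w$-dependence of the $O(\e_k^2)$ term in the convexity identity $|\Om_k\cap(\Om_k-\e_k w)|=|\Om_k|-\e_k\int_{\po_k}\langle w\cdot n\rangle_+\,dA+O(\e_k^2|w|^2)$ used in your part (ii). With a proved counting lemma replacing the Ehrhart invocation and with this tail control supplied, your outline (including the parity and divergence-theorem step converting $\sum_F|F|\langle w\cdot n_F\rangle_+$ into $\frac{1}{2}\sum_F|F|\,|w\cdot n_F|$) does assemble into \eqref{11}--\eqref{13}.
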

The bulk elastic energy \eqref{lim} is identical in its fully discrete and cell-averaged incarnations (Theorem 1 of \bbl \cite{blanc1} and Proposition 1 above). The surface energy density $\gd$ in \eqref{11}, \eqref{12} differs from the cel-averaged version $\gamma$ in \eqref{2} by the additive term $\frac{1}{2|\bn|}W(F)$. Because of this term, $\gd(F,\cdot)$, defined only for rational normals $\bn\in M$, cannot be continuously extended to irrational normals (Proposition 4.4 \cite{rosakis}). One way to cure this pathology is part (ii) of Proposition \ref{prop4}, \emph{cf}. Proposition 5.1in  \cite{rosakis}.
The problem is due to the geometric discrepancy \eqref{rem} between the continuous and discrete volumes, which for lattice polyhedra is of order $O(\e)$, the same order as the surface energy itself.

To correct this, it is possible to modify  $\Om$ to correct the discrepancy; see case (ii) in Proposition \ref{prop4}. One does this by exploiting the space between crystallographic planes.  The distance between two adjacent such planes with Miller normal $\bn\in M$ is $1/|\bn|$.  By pushing  each facet of $\Om$ outwards by half that distance one produces the augmented domain $\Om_k$ whose discrepancy \eqref{ok} is of order higher than $O(\e)$.   The appropriate surface energy density for this domain, in the sense of \eqref{13}, is $\gamma$, which is free of discontinuities. It also coincides with the cell-averaged one \eqref{2}.  
The reason for this  coincidence is that  by taking cell averages, in effect we eliminate the geometric discrepancy as shown by Corollary \ref{cor1}.

Next we compute the discrete energy of a coherent phase boundary. Let the interface $\Ss$ be the intersection of a crystallographic plane with $\Om$, that separates it into two parts $\Om^\pm$ with nonempty interior. Then $\Ss$ is a lattice polygon containing the origin $0$ with $\Omp\cup\Omm=\bar\Om$ and $\Omp\cap\Omm=\Ss$, so that $\Ompm$ are closed  convex lattice polyhedra. 

\begin{pro}\label{prop5}  Suppose $\Om\subset\RR^3$ is a convex lattice polyhedron with exterior Miller normal $\bn\in M$, $\Ss=P\cap\Om$ where $P$ is a crystallographic plane with Miller normal $\tn\in M$, $\hat y$ is given by \eqref{pw} and $\Phi$ satisfies \eqref{ph}. Choose $\e_k=1/k$, $k\in\ZZ$ and let  $L_k=\e_k L$.
 Then the discrete (unaveraged) energy \eqref{ene} satisfies
 \be\label{51} E_\e\{y,\Om\}=\int_\Om W(\nabla\hat y )dx+\e \int_\po \gamma_\diamond(\nabla\hat y ,\bn) dA+\e\int_\Ss \tau(F^+,F^-, \tn)dA+o(\e), \quad  \e= \e_k=1/k,\;\;k\in\ZZ,\;\;  k\to\infty,\ee
where $\gamma_\diamond$ is the surface energy density given by \eqref{12}, and 
\be\label{52} \tau(F^+,F^-, \tn)=\gamma(F^+, \hn)+\gamma(F^-, \hn)+\hat \tau(F^+,F^-, \tn)\ee
is the interfacial energy density. The last term above is the interaction energy:
\be\label{inten}\hat\tau(\Fp,\Fm,\tn) =\frac{1}{ 2|\tn|}\sum_{w\in L} \; \left[\frac{1}{ 2}\Phi(\Fm w)+\frac{1}{ 2}\Phi(\Fp w)+
\sum_{j=1}^{|w\cdot \tn| -1} \Phi\bigl( \Fm w +\frac{j}{|w\cdot \tn|}(\Fp-\Fm)w \bigr)\right] 
\ee
for $\tn\in M$. 
\end{pro}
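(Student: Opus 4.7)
The natural approach is to split the double sum in the discrete energy \eqref{ene} according to which side of the interface each of the two interacting lattice points lies on. Writing $\Om^\pm_d=\bar\Ompm\cap\e L$, one has $\Om^+_d\cap\Om^-_d=\Ss^d:=\Ss\cap\e L$ and $\Om^+_d\cup\Om^-_d=\Om\cap\e L$. I would decompose
$$E_\e\{\hat y,\Om\}=E_\e\{y^+,\bar\Omp\}+E_\e\{y^-,\bar\Omm\}-E_{\Ss\Ss}+E_{\mathrm{cross}},$$
with $y^\pm(x)=\Fpm x$. Here $E_{\Ss\Ss}$ corrects the over-counting of bonds whose two endpoints both lie on $\Ss$ (on which $\Fp w=\Fm w$ since $w\cdot\tn=0$ implies $\hn\cdot w=0$, so the two homogeneous deformations agree), and $E_{\mathrm{cross}}$ collects the ordered pairs with endpoints lying strictly on opposite sides of $\Ss$.

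I would next invoke Proposition \ref{prop4}(i) on the convex lattice polyhedra $\bar\Ompm$, which immediately produces the bulk $\int_\Om W(\nabla\hat y)\,dx$ together with $\e\int_\po\gd(\nabla\hat y,\bn)\,dA$ on the parts of $\po$ inherited by each side, plus a contribution from the shared facet $\Ss$ whose outward Miller normals from $\bar\Omp$ and $\bar\Omm$ are $-\tn$ and $\tn$ respectively. Writing $\gd(F,\pm\tn)=\gamma(F,\hn)+W(F)/(2|\tn|)$ reduces the $\Ss$ part to $\e\int_\Ss[\gamma(\Fp,\hn)+\gamma(\Fm,\hn)+(W(\Fp)+W(\Fm))/(2|\tn|)]\,dA$. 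The remaining two pieces I would evaluate by direct counting on the two-dimensional sublattice $L_\Ss=\{w\in L:w\cdot\tn=0\}$, which has fundamental cell area $|\tn|$. A Riemann sum gives $E_{\Ss\Ss}=\frac{\e|\Ss|}{2|\tn|}\sum_{w\cdot\tn=0}\Phi(\Fp w)+o(\e)$. For $E_{\mathrm{cross}}$, fixing $w\in L$ with $w\cdot\hn>0$, the lattice points $x$ satisfying $x\cdot\hn<0$ and $(x+\e w)\cdot\hn>0$ lie on exactly the $|w\cdot\tn|-1$ interior crystallographic planes $\{x\cdot\hn=\e m/|\tn|\}$ for $m=-(|w\cdot\tn|-1),\ldots,-1$, each containing $|\Ss|/(\e^2|\tn|)+O(1/\e)$ points in the footprint of $\Ss$. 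Using $\Fp-\Fm=a\otimes\hn$ to recast the argument of $\Phi$ as the affine interpolation $\Fm w+(k/|w\cdot\tn|)(\Fp-\Fm)w$ with $k=|w\cdot\tn|+m\in\{1,\ldots,|w\cdot\tn|-1\}$, and pairing $w\leftrightarrow -w$ via evenness of $\Phi$, yields
$$E_{\mathrm{cross}}=\frac{\e|\Ss|}{|\tn|}\sum_{w\cdot\tn>0}\sum_{k=1}^{|w\cdot\tn|-1}\Phi\bigl(\Fm w+\tfrac{k}{|w\cdot\tn|}(\Fp-\Fm)w\bigr)+o(\e).$$

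Combining the four pieces, I would split $W(\Fpm)=\sum_{w\cdot\tn>0}\Phi(\Fpm w)+\tfrac12\sum_{w\cdot\tn=0}\Phi(\Fpm w)$: the $\{w\cdot\tn=0\}$ parts of $W(\Fpm)/(2|\tn|)$ cancel exactly against $-E_{\Ss\Ss}$ (using $\Phi(\Fp w)=\Phi(\Fm w)$ on $L_\Ss$), while the half-endpoints $\tfrac12\Phi(\Fpm w)$ for $w\cdot\tn>0$ arising from $W(\Fpm)/(2|\tn|)$ merge with the interior sum in $E_{\mathrm{cross}}$ to reconstitute precisely the trapezoidal-rule structure of $\hat\tau$ in \eqref{inten}. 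The main obstacle is exactly this bookkeeping: the lattice points lying on $\Ss$ are shared by $\bar\Ompm$, and their dangling-bond energies are partitioned between the two $\gd$ surface densities and the overcount $E_{\Ss\Ss}$, so tracking how these three ingredients combine with the interior-plane sum of $E_{\mathrm{cross}}$ to furnish the \emph{full} trapezoidal sum (with its half-weight endpoints) is where most of the calculation sits. Controlling the $O(\e^2)$ errors from the perimeter of $\Ss$ and from edges of $\po$ meeting $\Ss$ is routine given the finite-range hypothesis \eqref{phr} on $\Phi$, and confirms that only the sequence-independent leading order survives in $o(\e)$.
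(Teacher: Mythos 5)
Your strategy is essentially the paper's own: split into the two one-sided energies on the closed lattice polyhedra $\bar\Ompm$ (handled by Proposition \ref{prop4}(i)), a correction for bonds counted on $\Ss$, and a cross term computed plane by plane using the two-dimensional cell area $|\tn|$ and $\Fp-\Fm=a\otimes\hn$ to produce the interpolated arguments; then recombine via $\gd(F,\tn)=\gamma(F,\hn)+W(F)/(2|\tn|)$. The only real difference is bookkeeping of the bonds with exactly one endpoint on $\Ss$: the paper leaves those in its cross sums (they are the $j=0$ terms, which after symmetrizing in $w$ become the half-weight trapezoid endpoints) and cancels the full $W(\Fpm)/(2|\tn|)$ against an $E^\Ss_k$ whose inner sum runs over all of $\Om$, whereas you keep them inside $E_\e\{y^\pm,\bar\Ompm\}$, subtract only the doubly counted $\Ss$--$\Ss$ bonds, and generate the half-endpoints from the $w\cdot\tn\neq 0$ part of $W(\Fpm)/(2|\tn|)$. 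The two ledgers are algebraically equivalent, so this is the same proof with the accounting shuffled, and your plane-counting of the cross bonds coincides with the paper's computation in different variables.

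Two caveats. First, the proposition assumes only the decay hypothesis \eqref{ph}, not finite range \eqref{phr}; the paper pays for this with a cutoff $\dd=\dd_k=k^{-1+\zeta}$ and tail estimates of order $O((k\dd)^{-p})$ and $O(k^{-p}\dd^{1-p})$ via Lemma 3.2 of \cite{rosakis}, so your appeal to \eqref{phr} proves a weaker statement unless you supply such estimates. Second, your recombination produces \eqref{inten} with the sum effectively restricted to $w\cdot\tn\neq 0$: in your accounting the in-plane terms $\tfrac12\Phi(\Fp w)+\tfrac12\Phi(\Fm w)$ with $w\cdot\tn=0$ are exactly cancelled by $-E_{\Ss\Ss}$, so they never appear in $\hat\tau$. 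That is in fact the correct outcome: if they were retained, one would get $\tau(F,F,\tn)=\tfrac{1}{2|\tn|}\sum_{w\cdot\tn=0,\,w\neq 0}\Phi(Fw)\neq 0$ for a homogeneous deformation $\Fp=\Fm=F$, contradicting \eqref{11}; and the paper's own $j=0$ terms likewise only arise for $w\cdot\tn\neq 0$. So your claim to reconstitute \eqref{inten} ``precisely'' should be read with that restriction on the $w$-sum; with it, your derivation is sound.
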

 
 We postpone the proof  of Proposition \ref{prop5} until the end of this section. 

In the one-dimensional case, the analogous interfacial energy is obtained by Mora-Corral (\cite{mora}  Theorem 6).  Blanc \& Le Bris \cite{blanc2} consider a model in three dimensions where atomic interactions in an interfacial layer of finite thickness are harmonic. 
 
The fully discrete surface interfacial energy density with its cell-averaged counterpart appear quite different at first glance. The difference occurs in the interaction terms $\hat \tau$ of \eqref{inten} versus $\hat\sigma$ of \eqref{ie}.
 On the other hand,  observe that (modulo a multiplicative constant) the term in brackets is formally identical to the \emph{trapezoidal approximation} of the line integral
$\int_{\Fm }^{\Fp } \Phi(Fw) dF$ over a straight path with endpoints $\Fpm $, partitioned into $|w\cdot \tn|$ subintervals. For a function $f:\RR^n\to \RR$, $a$, $b\in \RR^n$, $K\in\ZZ_+$, the \emph{trapezoidal sum} of the function $f$ from $a$ to $b$ with partition number $K$ is
\be\label{trs}T[f;a,b,K]={|b-a|\over K}\sum_{j=0}^{K-1}{1\over2}\bigl[ f(x_j)+f(x_{j+1})\bigr],\quad x_j=a+{j\over K}(b-a),\quad j=0,\ldots,K\ee
In addition, both  discrete densities $\gd(F,\cdot)$ and $\hat\tau(\Fp,\Fm,\cdot)$ are defined for \emph{rational} normals.  If we approximate an irrational normal by a sequence of rational ones $\bn_j\in M$, then necessarily $|\bn_j|\to\infty$ and in this limit, it turns out that the difference between discrete and cell averaged densities disappears. Actually,  rational orientations can be approximated by such sequences.  This clarifies the connection between discrete and cell averaged energies:

\begin{pro}\label{prop6}  Given any unit vector $n\in S^2$, there is a sequence of vectors $\bn_j$, $j\in\ZZ$,  such that
\be\label{seq}\bn_j\in M, \quad \frac{1}{|\bn_j|}\bn_j\to n, \quad |\bn_j|\to\infty \;\;\hbox{as }j\to\infty.\ee
For such a sequence of Miller normals, the corresponding surface and interfacial energy densities approach their cell-averaged counterparts in the limit as $j\to\infty$. In particular,
\be\label{lmtg}\lim_{j\to\infty}\gd(F,\bn_j)=\gamma(F,n).\ee
Also, for any sequence $F^\pm_j\to\Fpm$ such that $F^+_j-F^-_j=a_j\otimes\hn_j$ for some $a_j\to a\in\RRR$, with $\hn_j=\bn_j/|\bn_j|$,
\be\label{lmts} \lim_{j\to\infty}\tau(F^+_j,F^-_j,\bn_j)=\sigma(\Fp,\Fm,n)\ee
where $\sigma$ is defined in \eqref{ie2} and $\tau$  in \eqref{52}, \eqref{inten}.
\end{pro}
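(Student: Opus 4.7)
The plan is to prove (i) by a density-plus-primitivization argument and (ii) by treating the two limits separately: \eqref{lmtg} follows directly from the decomposition in \eqref{12}, whereas \eqref{lmts}, after stripping off the surface contributions, reduces to identifying the bracket in \eqref{inten} as a trapezoidal Riemann sum and passing to the limit term-by-term in $w$.

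For (i), I would start from the density of rational directions in $S^2$ (e.g.\ by stereographic projection of $\mathbb{Q}^2$). Given $n\in S^2$, pick any $\bm_j\in\ZZ^3\setminus\{0\}$ with $\bm_j/|\bm_j|\to n$ and set $\bn_j=\bm_j/\gcd(\bm_j)\in M$, which preserves the direction. If $n$ has no primitive integer representative, then $|\bn_j|\to\infty$ automatically: otherwise some bounded subsequence of primitive vectors takes only finitely many values, and any constant subsequence would force $n$ to be rational. If $n=p/|p|$ with $p\in M$, take instead $\bn_j=jp+q_j$ with $|q_j|$ bounded and $\gcd(jp+q_j)=1$; the positive density of primitive lattice vectors in any shifted unit cube guarantees such $q_j$ for every $j$.

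For \eqref{lmtg}, the second form in \eqref{12} reads $\gd(F,\bn_j)=\gamma(F,\bn_j/|\bn_j|)+W(F)/(2|\bn_j|)$: the correction vanishes as $|\bn_j|\to\infty$, and $\gamma(F,\cdot)$ is continuous on $S^2$ by dominated convergence applied to \eqref{2}, with the dominant $|w|\,|\Phi(Fw)|$ controlled by \eqref{ph}. For \eqref{lmts}, I would decompose $\tau$ via \eqref{52} into the surface terms $\gamma(F_j^\pm,\hn_j)\to\gamma(\Fpm,n)$ (joint continuity of $\gamma$) plus the interaction $\hat\tau(F_j^+,F_j^-,\bn_j)$. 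The key observation, anticipated in the paragraph following \eqref{trs}, is that the inner bracket in \eqref{inten} is exactly the trapezoidal sum for $t\mapsto\Phi(((1-t)F_j^-+tF_j^+)w)$ with $K_j=|w\cdot\bn_j|$ subintervals. Absorbing the prefactor $(2|\bn_j|)^{-1}$ rewrites
\[
\hat\tau(F_j^+,F_j^-,\bn_j)=\frac{1}{2}\sum_{w:\,w\cdot\bn_j\ne 0}|w\cdot\hn_j|\,\frac{T_w^{(j)}}{|w\cdot\bn_j|}+\frac{1}{4|\bn_j|}\sum_{w:\,w\cdot\bn_j=0}\bigl[\Phi(F_j^-w)+\Phi(F_j^+w)\bigr];
\]
the second sum is $O(1/|\bn_j|)\to 0$. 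For each fixed $w$ with $w\cdot n\ne 0$, $|w\cdot\bn_j|\to\infty$, and the trapezoidal rule yields $T_w^{(j)}/|w\cdot\bn_j|\to\int_0^1\Phi(((1-t)\Fm+t\Fp)w)\,dt$; combined with $|w\cdot\hn_j|\to|w\cdot n|$ and dominated convergence in $w\in L$, the first sum approaches $\frac{1}{2}\sum_w|w\cdot n|\int_0^1\Phi(((1-t)\Fm+t\Fp)w)\,dt$, which, after interchanging sum and integral and invoking \eqref{2}, equals $\hat\sigma(\Fp,\Fm,n)$ of \eqref{ie}. Adding the three contributions gives $\sigma(\Fp,\Fm,n)$ of \eqref{ie2}.

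The hard part will be producing a $j$-uniform, $w$-summable dominant for the first sum. The naive bound $|T_w^{(j)}/|w\cdot\bn_j||\le\max_{t\in[0,1]}|\Phi(((1-t)F_j^-+tF_j^+)w)|$ together with $|w\cdot\hn_j|\le|w|$ yields a dominant of order $|w|^{-(2+p)}$ from \eqref{ph}, which is $\ZZ^3$-summable only when $p>1$; in addition, the straight-line interpolants $(1-t)F_j^-+tF_j^+$ must remain uniformly bi-Lipschitz so that the decay in \eqref{ph} applies with $j$-independent constants. A secondary subtlety is the treatment of $w$ with $w\cdot n=0$ but $w\cdot\bn_j\ne 0$: the partition number need not grow, but $|w\cdot\hn_j|\to 0$ forces each such term to vanish, and the explicit splitting above already isolates the residual 2D sublattice $\{w:w\cdot\bn_j=0\}$ cleanly.
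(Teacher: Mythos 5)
Your treatment of the limits \eqref{lmtg} and \eqref{lmts} is essentially the paper's own argument (rewrite the bracket in \eqref{inten} as a trapezoidal sum with partition number $|w\cdot\bn_j|$, let it converge to $\int_0^1\Phi\bigl((tF^++(1-t)F^-)w\bigr)dt$ for each $w$ with $w\cdot n\neq0$, kill the correction $W(F)/(2|\bn_j|)$ in \eqref{12}, and use continuity of $\gamma$), and in places it is more careful than the paper: you isolate the $w\cdot\bn_j=0$ sublattice explicitly and worry about a $j$-uniform dominant, which the paper dismisses with ``clearly converges.'' The two caveats you flag are dischargeable: under the finite-range hypothesis \eqref{phr}, which is the setting of the cell-averaged results defining $\sigma$, all $w$-sums are finite once the interpolants are uniformly invertible, and uniform invertibility holds because $\det\bigl(tF_j^++(1-t)F_j^-\bigr)$ is affine in $t$ along the rank-one segment $F_j^+-F_j^-=a_j\otimes\hn_j$, hence bounded below by $\min\{\det F_j^+,\det F_j^-\}$, which stays away from zero for large $j$; under \eqref{ph} alone your observation that one needs $p>1$ is a fair point, but it concerns the very definition of \eqref{inten} and affects the paper's proof equally.

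The genuine gap is in part (i), rational case. The principle you invoke --- ``positive density of primitive lattice vectors in any shifted unit cube'' --- is false as a tool here: a shifted unit cube contains only boundedly many lattice points, and by the Chinese Remainder Theorem there exist arbitrarily large boxes in $\ZZZ$-triples containing no primitive vector at all, so no density argument in a bounded window around $jp$ can produce the required $q_j$. The existence claim itself is true, but it needs an actual construction. The paper supplies one: choose a lattice basis $d_1,d_2,d_3\in M$ with $d_1\cdot m=d_2\cdot m=0$, $d_3\cdot m=1$ for $m\in M$ with $n=m/|m|$, let $b_i$ be the dual basis (so $m=b_3$), and set $\bn_j=b_1+b_2+jb_3$; then $\bn_j\cdot(-jd_1+d_2+d_3)=1$, so $\bn_j\in M$ by Bezout, while the direction converges to $n$ and $|\bn_j|\to\infty$. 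Equivalently, in your notation one may take $q_j\equiv q$ fixed: pick a primitive integer vector $u$ orthogonal to $p$ (it is primitive in $\ZZZ$ because $p^\perp\cap\ZZZ$ is a saturated sublattice) and $q\in\ZZZ$ with $u\cdot q=1$; then $\gcd(jp+q)$ divides $u\cdot(jp+q)=1$ for every $j$. With this repair (your irrational case, including the observation that $|\bn_j|\to\infty$ is automatic and that one should divide out the gcd, is fine and slightly more complete than the paper's), the rest of your argument goes through.
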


\begin{proof}  If $n\in S^2$ is \emph{irrational}, so that there is no $m\in M$ with $n=m/|m|$, the existence of a sequence \eqref{seq} is easy. Construct a convergent sequence of rational approximations of the components of $n$ and multiply each
resulting vector with the least common multiple of the 3 denominators to obtain $\bn_j$.

 If  $n\in S^2$ is \emph{rational}, so that $n=m/|m|$ for some $m\in M$, then there is a lattice ($\ZZZ$) basis of vectors $d_i\in M$, $i=1,2,3$,  such that  $d_1\cdot m=d_2\cdot m=0$, $d_3\cdot m=1$. In fact , letting  $b_i\in M$, $i=1,2,3$ be the dual basis vectors, so that $b_i\cdot d_k=\dd_{ik}$, one has $m=b_3$.  Then one may choose $\bn_j=b_1+b_2+jb_3$ for $j\in\ZZ$.  Clearly the last two conditions of \eqref{seq} hold. To show that $\bn_j\in M$, choose $p_j=-jd_1+d_2+d_3\in\ZZZ$ and note that  $\bn_j\cdot p_j=1$ for $j=1,2,3,\dots$, so  by Bezout's Lemma, the components of $\bn_j$ are coprime and $\bn_j\in M$.

Having established \eqref{seq}, replace $\bn$ by $\bn_j$ in \eqref{12}, and  the limit in \eqref{lmtg} is trivial by \eqref{seq}.

In view of \eqref{trs},  letting $\Phi_w(F)=\Phi(Fw)$,  \eqref{inten} can be written as
 \begin{align}\label{tau}
\hat \tau(\Fp,\Fm,\tn)=&{1\over 2|\tn||\Fp-\Fm|}\sum_{w\in L} |w\cdot \tn|T\!\left[\Phi_w(\cdot),\Fm ,\Fp ,|w\cdot \tn|\right]\notag\\
=&{1\over 2|\Fp-\Fm|}\sum_{w\in L} |w\cdot \hn|T\!\left[\Phi_w(\cdot),\Fm ,\Fp ,|w\cdot \tn|\right] 
\end{align}
where $\hn=\tn/|\tn|$.  Choose the limit vector $n=\hn$ in \eqref{seq}. 
Replace $\tn$ by $\bn_j$, $\hn$ by $\hn_j$ and $\Fpm$ by $F^\pm_j$ above and note that unless $w\cdot \hn=0$ (trivial case), we have that $ |w\cdot \bn_j|\to\infty$ as $j\to\infty$.
The trapezoidal sum $T\!\left[\Phi_w(\cdot),F^-_j ,F^+_j ,|w\cdot \tn_j|\right] 
$  then converges to the corresponding integral
$\int_{\Fm }^{\Fp } \Phi(Fw)dF$.
From \eqref{tau} and  \eqref{2},  $\hat\tau(F^+_j,F^-_j,\bn_j)$ clearly converges to $\hat\sigma(\Fp,\Fm,n)$ of \eqref{ie}.\end{proof}  

\begin{rem}  The reason for the convergence of discrete to cell-averaged energies in the ``large Miller index'' limit  \eqref{seq} is related once again to the geometric discrepancy  \eqref{rem}.  One may approximate a lattice polyhedron $\Om$ by a sequence of rational polyhedra $\Om_j$ whose facets have Miller normals that converge to those of $\Om$ in the sense of \eqref{seq}. Then one can show that the remainder \eqref{rem} for $\Om_j$ tends to zero for large $j$. This is related to the fact that the interplanar spacing between crystallographic planes with Miller normal $\bn$ is $1/|\bn_|$. 

If we approximate both the outside Miller normal $\bn$ and the interfacial one $\hn$ by means of large index sequences as in \eqref{seq}, then the entire discrete energy \eqref{51} converges to its cell-averaged counterpart \eqref{pwe2}.
\end{rem} 
\noindent \emph{Proof of Proposition \ref{prop5}.} Consider the piecewise homogeneous deformation \eqref{pw}. Choose $\e_k=1/k$, $k\in\ZZ$ and let  $L_k=\e_k L$, $E_k=E_{\e_k}\{\hat y,\Om\}$  in \eqref{ene}.
We split the energy as follows. First we split the outer sum in \eqref{ene}:
 $$E_k=E^+_k+E^-_k - E^\Ss_k;$$
 where, letting $\Phi$ stand for $\Phi(k(\hat y(z)-\hat y(x)))$, 
 \be\label{es}E^\pm_k={1\over2k^3}\sum_{x\in \lk\cap \Ompm} \; \sum_{z\in \lk\cap \Om} \Phi,\quad 
 E^\Ss_k={1\over2k^3}\sum_{x\in \lk\cap \Ss} \; \sum_{z\in \lk\cap \Om} \Phi.\ee
 The minus sign precludes $\Ss$ from contributing to both $E^\pm_k$.
 We further split $E^\pm_k$ by splitting the inner sum above:
 \be\label{splitt}E^+_k=E_k^{++}+E^{+-}_k,\quad E^-_k=E_k^{--}+E^{-+}_k.\ee
 where
 \be\label{dir}E_k^{++}={1\over2k^3}\sum_{x\in \lk\cap \Omp} \; \sum_{z\in \lk\cap \Omp} \Phi(k\Fp(z- x))=E_{\e_k}\{\hat y,\Om^+\}\ee
 is the energy of $\Omp$, that is, with $\Omp$ replacing $\Om$ and $y(x)=\hat y(x)=\Fp x$ in \eqref{11}, while
 \be\label{eint} E_k^{+-}={1\over2k^3}\sum_{x\in \lk\cap \Omp} \; \sum_{z\in \lk\cap \Om\bc\Omp} \Phi(k(\Fm z-\Fp x))\ee
 is part of the \emph{interaction energy} between $\Omp$ and $\Omm\bc \Ss=\Om\bc\Omp$. The total interaction energy $E_k^{+-}+E^{-+}_k$ with $+$ and $-$ interchanged in \eqref{eint} for the second term.
 The remaining terms in $E_k$ are $E^{++}_k+E^{--}_k - E^\Ss_k$, which we consider first. The restriction of $\hat y$ to $\Omp$ and $\Omm$ is an affine deformation, while $\Ompm$ are convex lattice polyhedra with respect to $L_k$.  Proposition \ref{prop4}  then applies to $E^{++}_k$ with $F=\Fp$ and $\Om=\Omp$, and similarly for $E^{--}_k$, and asserts that 
$$E^{\pm\pm}_k=\int_{\Om^\pm} W(F^\pm)dV+\kk \int_{\partial\Om^\pm}{\gamma_\diamond}(F^\pm,\bn)dA+o(1/k)
$$
as $k\to\infty$. Adding the $\pm$ contributions,  we obtain
\be\label{mess}E^{++}_k+E^{--}_k =\int_\Om W(\nabla\hat y)dV+\kk \int_{\partial\Om}{\gamma_\diamond}(\nabla\hat y,\bn)dA+\kk \int_{\Ss}[{\gamma_\diamond}(\Fp,\tn)+{\gamma_\diamond}(\Fm,\tn)]dA+o(1/k) ,
 \ee
 where $\tn$ is the Miller normal to $\Ss$. From \eqref{es},
 \begin{gather*}E^\Ss_k={1\over2k^3}\sum_{x\in \lk\cap \Ss} \; \sum_{z\in \lk\cap \Om}\Phi(k(\hat y(z)-\hat y(x))) \\
= {1\over2k^3}\sum_{x\in \lk\cap \Ss} \left[ \sum_{z\in \lk\cap \Omp}\Phi(k\Fp(z- x))+ \sum_{z\in \lk\cap \Omm}\Phi(k\Fm(z- x))-\sum_{z\in \lk\cap \Ss}\Phi(k\Fpm(z- x))\right] \\
 = {1\over2k^3}\sum_{x\in \lk\cap \Ss} \; \left[ 
 \sum_{w\in L_+}\Phi(\Fp w )+ \sum_{w\in L_-}\Phi( \Fm w )-\sum_{w\in L_0}\Phi( \Fpm w )\right] \\ 
 - {1\over2k^3}\sum_{x\in \lk\cap \Ss} \; \sum_{z\in \lk\bc \Om}\Phi(k(\hat y(z)-\hat y(x)))
 \end{gather*}
 where in the third line, $w=k(z-x)$, $L_\pm=\{x\in L: \pm x\cdot\tn\ge 0\}$, $L_0=\{x\in L: x\cdot\tn= 0\}$, we have extended the definition of $\hat y$ in \eqref{pw} to all of $\RRR$ by letting $\hat y(x)=\Fpm x$ for $\pm x\cdot\tn\ge 0$, $x\in\RRR$. Note also that $\Fp x=\Fm x$ for $x\cdot\tn= 0$. The last integral above is $o(1/k)$. The sums over $w\in L_\pm$ in the third line above can be extended over the whole of $L$ after division by $2$, the summands being even in $w$. The fact that the contribution of $w\in L_0$ is counted twice is rectified by the third term in brackets. The inner sums are independent of $x$, hence the outer sum over $x\in \Ss$ gives a factor of $\#(\Ss\cap\lk)$.  Since $\Ss$ is a lattice polygon, it is part of a crystallographic plane containing a two-dimensional lattice with unit cell area equal to the norm $|\tn|$ of the  Miller normal of $\Ss$, e.g., Beck \& Robins \cite{br}.  As a result we estimate 
 $$\#(\Ss\cap\lk)= k^2|\Ss|/|\tn|+O(k).$$
 After recalling \eqref{0}, the result is
 $$
 E^\Ss_k={1\over 2k} \int_{\Ss}{W(\Fp)+W(\Fm)\over |\tn|}dA+o(1/k).
$$ 
Combine this with \eqref{mess} and recall \eqref{2}, \eqref{12} to conclude
\be\label{messs}E^{++}_k+E^{--}_k- E^\Ss_k= \int_\Om W(\nabla\hat y)dV+\kk \int_{\partial\Om}{\gamma_\diamond}(\nabla\hat y,\bn)dA+\kk \int_{\Ss}[{\gamma}(\Fp,\hn)+{\gamma}(\Fm,\hn)]dA+o(1/k) .
 \ee
where $\hn=-\tn/|\tn|$.  The integral over the interface $\Ss$ involves the  surface energy density $\gamma$ from  \eqref{2}, while those over the boundary involve $\gamma_\diamond$ of  \eqref{12}.
 
 We turn to the interaction energy $E_k^{+-}$ from \eqref{eint}. As in \bbl \cite{blanc1}, we introduce a parameter $\dd=\dd_k>0$, and write \eqref{eint} as
 $$ E_k^{+-}={1\over2k^3}\sum_{x\in \lk\cap \Omp} \; \sum_{z\in \lk^-\cap B_\dd(x)} \Phi(k(\Fm z-\Fp x))
 +O((k\dd)^{-p}) +O(\dd^2/k).$$
 letting $L^-_k=\{z\in L_k : z\cdot \hn <0\}$. After extending the deformation so that  \eqref{pw} holds  in the whole of $\RRR$, the $O((k\dd)^{-p})$ term in  the remainder is estimated using Lemma 3.2 of \cite{rosakis}, since $\Phi$ satisfies \eqref{ph}, in the same spirit as in the proof of Proposition 3.3 in \cite{rosakis}. The  $O(\dd^2/k)$  estimate is due to the fact the inner sum over $z$  is bounded, and differs from the one over  $ \lk\cap (B_\dd(x)\bc\Omp)\cap\Omm$ only for a set of $x$ near $\Ss\cap\po$ and having  measure $O(\dd^2)$,  therefore containing $O(\dd^2k)$ elements (then dividing by $k^3$).  The only nonzero contributions to the sum above are from
$x\in\Omp$ with $\rm{dist}(x,\Ss)<\dd$. Let $\hn=-\tn/|\tn|$,
 $$ H_{\dd}=\{x\in\RRR: 0\le  x\cdot \hn<\dd\},\quad H^+_\dd=H_\dd\cap\Omp$$ 
Then,
 \be\label{pri} E_k^{+-}={1\over2k^3}\sum_{x\in \lk\cap H^+_\dd} \; \sum_{z\in \lk^-\cap B_\dd(x)} \Phi(k(\Fm z-\Fp x)) 
 +O((k\dd)^{-p}) +O(\dd^2/k).\ee
 We let $w'=z-x$ and reverse the order of summation. The sum above becomes
\be\label{tsir}{1\over2k^3}\sum_{w'\in \lk\cap B^+_{\dd}}\;  \sum_{x\in \lk\cap P_{w'}} \Phi\bigl(k( \Fm w'+(x\cdot \tn){a\over |\tn|}
)\bigr) +O(\dd^3)\ee
where
 $$ P_{w'}=\{x\in\RRR: x=x_\Ss +u,\; x_\Ss\in\Ss, \; u\cdot\tn=0, 0\le -x\cdot\tn<w'\cdot\tn\},\quad  
B^+_{\dd}=\{w'\in B_\dd(0): w'\cdot \tn>0\}$$
and we have used \eqref{hd}. The second sum differs from the one in  \eqref{pri} by a portion near the boundary of $\Ss$ of measure $O(|w'|^2)$ and the summand is bounded, hence the estimate $O(\dd^3)$ above. Change variables to $w=kw'$, $j=-kx\cdot\tn$. Observe that for $x\in L$, $x\cdot\tn$ is an integer, hence so are $j$ and $w\cdot\tn$. Write the sum over $x$ as a double one over $x_\Ss\in \Ss$ and $0\le j<w\cdot \tn$. The summand depends on $w$ and $j$ but not on $x_\Ss$, hence summation over the latter variable gives a factor of $\#(\Ss\cap\lk) = k^2|\Ss|/|\tn|+O(k)$. The previous sum becomes
\be\label{is1}{|\Ss|\over 2k|\tn|}\sum_{w\in L\cap B^+_{k\dd}} \sum_{0\le j<w\cdot \tn } \Phi( \Fm w- {j\over |\tn|}a) +O(1/k^2)\ee
The analogous contribution from $E^{-+}_k$ is 
\be\label{is2}{|\Ss|\over 2k|\tn|}\sum_{w\in L\cap B^-_{k\dd}} \sum_{0\le j<w\cdot (-\tn) } \Phi(\Fp w- {j\over |\tn|}a),\ee
where $B^-_{k\dd}=\{w\in B_{k\dd}(0): w\cdot \hn<0\}$, $j=-x\cdot(-\tn)$ and we have written $\Fm-\Fp=a\otimes(-\hn)$.
In order to combine the sums from \eqref{is1} and \eqref{is2}, note that letting 
$K=|w\cdot\tn|$, 
 \begin{align*}
 \Fm w- {j\over |\tn|}a=\Fm w +{j\over K}(\Fp-\Fm)w & =(1-{j\over K})\Fm w+{j\over K}\Fp w,\quad w\cdot\tn >0,\notag\\
 \Fp w-{j\over |\tn|}a=\Fp w +{j\over K}(\Fm-\Fp)w & ={j\over K}\Fm w+(1-{j\over K})\Fp w ,\quad w\cdot\tn <0. 
 \end{align*}
Thus in adding \eqref{is1} and \eqref{is2} we may combine the $j$th term from the first sum to the $K-j$th term from the second for $j=1,\ldots,K-1$, yielding a term summed over $w\in B_{k\dd}(0)$. Note here that for $w\cdot \tn=0$ both sums vanish. The $j=0$ terms in both sums do not combine, but the summands are $\Phi(\Fpm w)$ which are both even in $w$ so we sum them over $w\in B_{k\dd}(0)$ after dividing by 2. The result is
$${| \Ss|\over 2k|\tn|}\sum_{w\in L\cap B_{k\dd}(0)} \; \left[{1\over 2}\Phi(\Fm w)+{1\over 2}\Phi(\Fp w)+
\sum_{j=1}^{|w\cdot \tn| -1} \Phi\bigl({j\over |w\cdot \tn|}\Fm w+(1-{j\over |w\cdot \tn|})\Fp w \bigr)\right] $$
Extending the outer sum to the whole of $L$ leaves a remainder which we estimate as before using \eqref{ph} by
\be\label{tsiu}
C\kk\sum_{w\in L\bc B_{k\dd}(0)} |w\cdot n| |\Phi(\lambda w)|<\kk\sum_{w\in L\bc B_{k\dd}(0)} C|w | ^{-(2+p)}<Ck^{-p}\dd^{1-p}.
\ee
 The remainders in \eqref{pri}-\eqref{is1} and  \eqref{tsiu} can be rendered $o(1/k)$ by choosing $p$ and $\dd = k^{-1+\zeta}$ suitably.  The total interaction energy is therefore
$$E_k^{+-}+E_k^{-+}=\kk|\Ss| \hat\tau(\Fp,\Fm,\tn) +o(1/k) $$
where the \emph{interaction energy density} is given by 
$$\hat\tau(\Fp,\Fm,\tn) ={1\over 2|\tn|}\sum_{w\in L} \; \left[{1\over 2}\Phi(\Fm w)+{1\over 2}\Phi(\Fp w)+
\sum_{j=1}^{|w\cdot \tn| -1} \Phi\bigl( \Fm w +{j\over|w\cdot \tn|}(\Fp-\Fm)w \bigr)\right] 
$$
for $\tn\in M$; see \eqref{miller}.  In view of \eqref{splitt} and  \eqref{messs}, \eqref{51}-\eqref{inten} are confirmed.

\subsection*{Acknowledgments} Thanks are due especially to J.R. Willis for his generous hospitality and stimulating discussions on periodic media which inspired some of the key ideas of this work at DAMTP, Cambridge; also ACMAC for support and MF Oberwolfach for hospitality. L. Truskinovsky, D. Mitsoudis and C. Makridakis for discussions.

\bibliographystyle{spmpsci}
\bibliography{refinterfac}
 \end{document}